\DeclareFontFamily{OMX}{MnSymbolE}{}
\DeclareSymbolFont{MnLargeSymbols}{OMX}{MnSymbolE}{m}{n}
\DeclareFontShape{OMX}{MnSymbolE}{m}{n}{
	<-6> MnSymbolE5
	<6-7> MnSymbolE6
	<7-8> MnSymbolE7
	<8-9> MnSymbolE8
	<9-10> MnSymbolE9
	<10-12> MnSymbolE10
	<12->  MnSymbolE12
}{}
\DeclareFontShape{OMX}{MnSymbolE}{b}{n}{
	<-6> MnSymbolE-Bold5
	<6-7> MnSymbolE-Bold6
	<7-8> MnSymbolE-Bold7
	<8-9> MnSymbolE-Bold8
	<9-10> MnSymbolE-Bold9
	<10-12> MnSymbolE-Bold10
	<12->  MnSymbolE-Bold12
}{}
\let\llangle\@undefined
\let\rrangle\@undefined
\DeclareMathDelimiter{\llangle}{\mathopen}%
{MnLargeSymbols}{'164}{MnLargeSymbols}{'164}
\DeclareMathDelimiter{\rrangle}{\mathclose}%
{MnLargeSymbols}{'171}{MnLargeSymbols}{'171}
\begin{document}
	
	\title{Two-dimensional Dyck words}
	
	\author{Stefano {Crespi Reghizzi} 
		\inst{1} 
		\and 
		Antonio Restivo
		\inst{2} 
		\and 
		Pierluigi {San Pietro}
		\inst{1}
	}
	
	\authorrunning{S. {Crespi Reghizzi} \and A. Restivo \and P. {San Pietro}} 
	\institute{Politecnico di Milano - DEIB \and Dipartimento di Matematica e Informatica, Universit\`{a} di Palermo
		\email{stefano.crespireghizzi@polimi.it \quad antonio.restivo@unipa.it \quad pierluigi.sanpietro@polimi.it}}
	
	\maketitle      

\begin{abstract}
We propose different ways of lifting the notion of Dyck language from words to 2-dimensional (2D) arrays of symbols, i.e., pictures, by means of new definitions of increasing comprehensiveness.
Two of the proposals are based on alternative definitions of a Dyck language, which are equivalent over words but not on pictures.

\noindent First, the  property that any two pairs of matching parentheses are either well-nested or disjoint, is rephrased for rectangular boxes and leads to the well-nested Dyck, $DW_k$. The latter is a generalization of the known Chinese box language. We prove that, unlike the Chinese boxes, the language $DW_k$ is not recognizable by a tiling system.

\noindent Second, the Dyck cancellation rule is rephrased as a neutralization rule, mapping a quadruple of symbols representing the corners of a subpicture onto neutral symbols.The neutralizable Dyck language $DN_k$ is obtained by iterating neutralizations, starting from 2-by-2 subpictures, until the picture is wholly neutralized. 

\noindent Third, we define the Dyck crossword $DC_k$ as the row-column combination of Dyck word languages, which prescribes that each column and row is a Dyck word.
The relation between matching parentheses is represented in $DC_k$ by an edge of a graph situated on the picture grid. Such edges form
 a circuit, of path length multiple of four, of alternating row and column matches. Length-four circuits have rectangular shape, while longer ones exhibit a large variety of forms. A proper subset of $DC_k$, called quaternate, is also introduced by excluding all circuits of length greater than 4. 
We prove that $DN_k$ properly includes $DW_k$, and that it coincides with the quaternate $DC_k$ such that the neutralizability relation between subpictures induces a partial order. The 2D languages well-nested, neutralizable, quaternate and Dyck crossword are ordered by strict inclusions.
This work can be also seen as a first step towards the definition of context-free picture languages.
\end{abstract}

\section{Introduction}\label{sect:introd}
 The Dyck language is a fundamental concept in formal language theory. 
 Its alphabet $\{a_1,\ldots, a_k,\, a'_1,\ldots,a'_k\}$, for any $k \ge 1$, { is partitioned into the pairs $[a_1, a'_1], \ldots, [a_k, a'_k]$}. 
 The language is the set of all words that can be reduced to the empty word by cancellations of the form $a_i a'_i \to \varepsilon$. 
The centrality of the Dyck language is 
expressed by the Chomsky-Sch\"{u}tzenberger theorem~\cite{ChomskySchutz1963} stating that any context-free language is the homomorphic image of the intersection of a Dyck language and a local one; { intuitively, a regular language is local if it is defined by the set of factors, of prefixes and of suffixes of length two.}
\par
Motivated by our interest for the theory of two-dimensional (2D) or picture languages, we investigate the possibility to transport the Dyck concept from one dimension to 2D. 
When moving from 1D to 2D, most formal language concepts and relationships drastically change. In particular, in 2D the Chomsky's language hierarchy is blurred because the notions of regularity and context-freeness  cannot be formulated for pictures without giving up some characteristic properties that hold for words. In fact, it is known \cite{GiammRestivo1997} that the three equivalent definitions of regular languages by means of finite-state recognizer, by regular expressions, and by the homomorphism of local languages, produce in 2D three distinct language families. 
The third one gives the family of \emph{tiling system recognizable languages} (REC) \cite{GiammRestivo1997}, that many think to be the best fit for regularity in 2D.
\par
The situation is less satisfactory for context-free (CF) languages where a transposition in 2D remains problematic.  None of the existing proposals of ``context-free'' picture grammars (\cite{Siromoney&Subramanian&Rajkumar&Thomas:1999,Matz:1997,Nivat91SSSD,Prusa2004,DBLP:journals/tcs/Crespi-ReghizziP05,Drewes:2006}, a survey is \cite{DBLP:books/ems/21/Crespi-ReghizziGL21}) 
match the expressiveness and richness of formal properties of 1D CF grammars.
In this paper we make the first step towards a new definition of CF languages by means of the following 2D reformulation of the Chomsky-Sch\"{u}tzenberger theorem, that, { to avoid destroying the rectangular structure of a picture,} we take with a  non-erasing homomorphism, as in~\cite{DBLP:journals/fuin/BerstelB96,okhotin2012}. 
A context-free picture language is the homomorphic, letter-to-letter image of the intersection of a 2D Dyck language and a 2D local language. While the notion of 2D local language is well-known, 
we are not aware of any existing definitions of 2D Dyck language; we know of just one particular example, the \emph{Chinese box language} in~\cite{DBLP:journals/tcs/Crespi-ReghizziP05}, that intuitively consists of embedded or concatenated boxes, and was proposed to illustrate the expressiveness of the grammars there introduced; that language is not a satisfactory proposal, since it is in the family REC, hence ``regular''.
 Although a best definition of Dyck picture languages might not exist, it is worth formalizing and comparing several possible choices; this is our contribution, while the study of the resulting 2D CF languages is still under way and not reported here.
\par
Our contribution includes four definitions of 2D ``Dyck'' languages based on various approaches, a study of their properties and the proofs of their inclusions.
\par
It is time to describe the intuitions behind each proposal and, particularly, the properties of Dyck words that are preserved in each case.  
\par
Instead of open and closed parentheses, the elements of a Dyck alphabet in 2D are the four ``corners'' $\ulcorner,\,\urcorner, \,\llcorner, \, \lrcorner$; there may be $k\ge1$ distinct quadruples. Each corner quadruple encodes two 1D Dyck alphabets, one for rows the other for columns. { The row alphabet has the pairs [$\ulcorner,\,\urcorner$] and [$\llcorner, \, \lrcorner$] while the column alphabet has the pairs [$\ulcorner,\,\llcorner$] and [$\urcorner, \, \lrcorner$]}. 
When in a picture a corner quadruple is correctly laid on the four vertexes of a rectangular subpicture we say that it represents a rectangle. 
\par
We start from the simpler cases, the \emph{well-nested Dyck language} $DW_k$  and the \emph{neutralizable Dyck language} $DN_k$. In both cases, a picture is partitioned into rectangles, in the sense that each pixel is placed on a vertex of a rectangle. The difference between 
 $DW_k$  and $DN_k$  resides in the relative positions that are permitted for the rectangles that cover a picture. 
\par
In a $DW_k$  picture the rectangles are well nested and do not overlap each other; thus it is fair to say that the well-nesting property of parenthesized words is here preserved. This is the same constraint of the Chinese boxes~\cite{DBLP:journals/tcs/Crespi-ReghizziP05}, which however use a different alphabet that is not a Dyck alphabet.
\par
The definition of $DN_k$  is based on the observation that the Dyck cancellation rule can be replaced by a neutralization rule that maps a pair of matching parentheses onto a neutral symbol $N$, $a_i a'_i \to N N$, so that a word is well parenthesized if it can be transformed to a word in $N^+$ of the same length. 
In 2D the reformulation of the neutralization rule is: if a rectangle (having the four corner symbols as vertexes) includes only neutral symbols, then the whole subpicture is neutralized. A picture is in $DN_k$  if all corner symbols are replaced with $N$ by a sequence of neutralization steps.
We prove the language inclusion $DW_k \subset DN_k$.
\par
The third approach is based on the crosswords of Dyck languages, a.k.a. row-column compositions. A picture is in $DC_k$ if all rows and all columns are Dyck  words. Crosswords have been studied for regular languages (e.g., in \cite{DBLP:journals/tcs/LatteuxS97,DBLP:journals/iandc/FennerPT22}) but not, to our knowledge, for context-free ones.
A little reflection suffices to realize that in $DN_k$, hence also in $DW_k$, the rows and columns are Dyck words, therefore the inclusion $DN_k \subseteq DC_k$ is obvious. 
\par
The interesting question is whether the inclusion is strict. Surprisingly, $DC_k$ comprises pictures that do not belong to $DN_k$. A subclass of $DC_k$, called \emph{quaternate}, or $DQ_k$, is the set of pictures covered by rectangles. We prove that $DQ_k$ includes also not neutralizable pictures, which present a circularity in the precedence relation that governs the neutralization order. 
\par
{ But the family of Dyck crosswords includes a large spectrum of pictures where patterns other than rectangles are present. Each pattern is a closed path, called a \emph{circuit}, made by alternating horizontal and vertical edges, representing a Dyck match on a row or on a column. A circuit label is a string in the language
$(\ulcorner\;\urcorner \, \lrcorner\;\llcorner )^+$,
thus having length $4k$. The circuit path may intersect itself one or more times on the picture grid--the case of zero intersection is the rectangle. We prove that for any value of $k\ge 0$ there exist pictures in $DC_k$ featuring a circuit of length $4+8k$. We have examined some interesting types of Dyck crosswords that involve complex circuits, but much remains to be understood of the general patterns that are possible.} 
\par
Section~\ref{s-preliminaries} lists basic concepts of picture languages and Dyck languages.
Section~\ref{sectBoxBasedDyck} recalls the Chinese boxes language, defines the $DW_k$  and $DN_k$  languages, and studies their relations.
Section~\ref{sectDyckCrosswords} introduces the $DC_k$ languages, exemplifies the variety of circuits they may contain, and defines the quaternate subclass $DQ_k$. Section~\ref{s-inclusions} proves the strict inclusions of the four above languages. 
Section~\ref{s-conclusion} mentions open problems.

\section{Preliminaries}\label{s-preliminaries}

All the alphabets to be considered are finite.
The following concepts and notations for picture languages follow mostly~\cite{GiammRestivo1997}.
A {\em picture} is a rectangular array of letters over an alphabet.
Given a picture $p$, $|p|_{row}$ and $|p|_{col}$ denote the number of rows and columns, respectively; $|p|=\left(|p|_{row},|p|_{col}\right)$ denotes the \emph{picture size}. 
The set of all non-empty pictures over $\Sigma$ is denoted by $\Sigma^{++}$. 

\par\noindent
 A {\em domain} $d$ of a picture $p$ is a quadruple $(i,j,{i'}, {j'})$, with $1\le i \le i'\le |p|_{row}$, and $1\le j \le j'\le |p|_{col}$. The \emph{subpicture of $p$ } with domain $d=(i,j,{i'}, {j'})$, denoted by $ spic(p,d)$ is the (rectangular) portion of $p$ defined by the top-left coordinates $(i,j)$ and by the bottom right coordinates
$({i'}, {j'})$.

\par\noindent\emph{Concatenations.} Let $p,q \in \Sigma^{++}$. 
The {\em horizontal} {\em concatenation} of $p$ and $q$ is denoted as $p\obar q$ and it is defined
when $|p|_{row}= |q|_{row}$. 
Similarly, the {\em vertical} {\em concatenation} $p \ominus q$ is 
defined when $|p|_{col}= |q|_{col}$.
 We also use the power operations $p^{\ominus k}$ and $p^{\obar k}$, $k\geq 1$, their closures $p^{\ominus +},$ $p^{\obar +}$ and we extend the concatenations to languages in the obvious way. 
 
 The notation $N^{m,n}$, where $N$ is a symbol and $m,n>0$, stands for a homogeneous picture of size $m,n$. For later convenience, we extend this notation to the case where either $m$ or $n$ are 0, to introduce identity elements for vertical and horizontal concatenations: given a picture $p$ of size $(m,n)$, by definition $p \obar N^{m,0}= N^{m,0}\obar p = p$ 
 and $p \ominus N^{0,n}= N^{0,n}\ominus p = p$.

\par\noindent The \emph{Simplot closure}~\cite{DBLP:journals/tcs/Simplot99} operation $L^{**}$ is defined on a picture language~ $L$ as
the { set of pictures $p$ tessellated by pictures in $L$,} more precisely defined by the following condition:
\begin{gather}\nonumber
\exists \text{ a partition of } \{1, \ldots, row(p) \} \times \{1, \ldots, col(p) \} 
\text{ into } n\ge 1 \text{ domains } d_1, \ldots, d_n \text{ of } p
\\\label{eqDefSimplot}
\text{ such that for all } 1 \le i \le n \text{ the subpicture } spic(p, d_i) \text{ is in } L.
\end{gather}
Notice that the concatenations $L^{\ominus k}$, $L^{\obar k}$ and their closures $L^{\ominus +}$, $L^{\obar +}$ are included in the Simplot closure of $L$, which therefore is the most general way of assembling a picture starting from pictures in a given language. 
\par
{ To help understanding, we reproduce a picture in $L^{**}$, where $L$ is tessellated into the 1-by-1, 1-by-2, 2-by-1, and 2-by-2 pictures shown, and cannot be obtained by horizontal and vertical partitions: }
\begin{center}
\psset{arrows=-,labelsep=5pt,colsep=18pt,rowsep=12pt,nodealign=true}         
\scalebox{0.7}{
	\begin{psmatrix}                                      
		\dotnode{p11}{}&\dotnode{p12}{} &\circlenode{p13}{} &\dotnode{p14} {}      
		\\                                           
		\dotnode{p21}{} &\dotnode{p22}{} &\dotnode{p23}{} &\dotnode{p24}{}      
		\\                                           
		\dotnode{p31}{} &\dotnode{p32}{} &\dotnode{p33}{} &\circlenode{p34}{}       
		\\                                           
		
		\ncarc[arcangle=90,linestyle=solid,linewidth=1.0pt]{p11}{p12} 
		\ncarc[arcangle=90,linestyle=solid,linewidth=1.0pt]{p12}{p11} 
		\ncarc[arcangle=90,linestyle=solid,linewidth=1.0pt]{p14}{p24} 
    \ncarc[arcangle=90,linestyle=solid,linewidth=1.0pt]{p24}{p14} 		
		\ncarc[arcangle=90,linestyle=solid,linewidth=1.0pt]{p21}{p31} 
    \ncarc[arcangle=90,linestyle=solid,linewidth=1.0pt]{p31}{p21} 		

 \ncarc[arcangle=90,linestyle=solid,linewidth=1.0pt]{p22}{p23} 
 \ncarc[arcangle=90,linestyle=solid,linewidth=1.0pt]{p23}{p33} 
 \ncarc[arcangle=90,linestyle=solid,linewidth=1.0pt]{p33}{p32} 
 \ncarc[arcangle=90,linestyle=solid,linewidth=1.0pt]{p32}{p22}

%
%
%
	\end{psmatrix}
}
\end{center} 

\par{  We assume some familiarity with the basic properties of the family of REC languages~\cite{GiammRestivo1997}, in particular with their definition by the projection of a local 2D language or equivalently, by the projection of the intersection of two domino languages for rows and for columns.
}
\paragraph*{Dyck alphabet and language}
The definition and properties of Dyck languages are basic concepts in formal language theory, yet we prefer to list them since each one of our developments for 2D languages differs with respect to the property it strives to generalize.
	\par
	For a Dyck language $D \subseteq \Gamma_k^*$, the alphabet has size $|\Gamma_k|=2k$ and is partitioned into two sets of cardinality $k\ge 1$, 
	denoted $\{a_i \mid 1 \le i\le k\} \cup \{a'_i \mid 1 \le i\le k\}$. 
\par
	The Dyck language $D_k$ has several equivalent, definitions. We recall the word congruence or {\em cancellation rule} defined by $a_i a'_i = \varepsilon$: a word is in $D_k$ if it is congruent to $\varepsilon$, i.e., it can be erased to $\varepsilon$ by repeated application of the cancellation rule. 
	We say that in a word $x \in D_k$ two occurrences of terminals $a_i,\, a'_i$ \emph{match} if they are erased together by an application of the cancellation rule on $x$. 
\par
	A characteristic of Dyck languages is that, in every word of $D_k$, any two factors $a_i y a'_i$ and $a_j w a'_j$, with $y,w \in \Gamma^*$ where $a_i , a'_i$ and $a_j, a'_j$ are matching pairs, are either disjoint or \emph{well-nested}.
\par
	A (non-$\varepsilon$) Dyck word is {\em prime} if it is not the concatenation of two Dyck words. The set of prime words can be defined \cite{Berstel79} as the set 
	$(D_k-\varepsilon) -(D_k-\varepsilon)^2$. 
\par
	For future comparison with pictures, we introduce an equivalent definition of the Dyck language by means of the following \emph{neutralization rule} instead of the cancellation rule, since the latter does not work for pictures: erasing a subpicture would not result in a picture. 
	Let $N \notin \Gamma_k$ be a new terminal character called \emph{neutral}. 
		For every word in $(\Gamma_k\cup \{N\})^*$ define the congruence $\approx$, for all $1 \le i \le n$, and for all $m\ge 0$ as: 
\begin{equation}\label{eqNeutralization1DDyck}
a_i\,N^m a'_i \approx N^{m+2}.
\end{equation}
		A word $x \in\Gamma_k^*$ is in $D_k$ if it is $\varepsilon$ or it is $\approx$-congruent to $N^{|x|}$.
	An equivalent definition of the Dyck language is based on the observation that a Dyck word can be enlarged either by surrounding it with a matching pair of parentheses, or by concatenating it to another Dyck word. Therefore, the Dyck language over $\Gamma_k$ can be defined through a {\em nesting accretion} rule: 
	given a word $x\in \Gamma_k^*$, a nesting accretion of $x$ is a word of the form $a_i x a'_i$. The language $D_k$ can then be defined as the smallest set including the empty word and closed under concatenation and nesting accretion.

\section{Box-based choices of Dyck picture languages}\label{sectBoxBasedDyck}
In this section we present two simple choices, called well-nested and neutralizable, each one  conserving one of the characteristic properties of Dyck words. 
\par
To make the analogy with Dyck words more evident, we represent in 2D the  parentheses pair $[ \, ,\, ]$ by a quadruple of corners $\ulcorner,\,\urcorner, \,\llcorner, \, \lrcorner$. Then inside a picture such a quadruple matches if it is laid on the four vertexes of a rectangle (i.e., a subpicture), as we see in the picture 
$
\begin{array}{cccc}\red \ulcorner &\ulcorner& \urcorner&\red \urcorner
\\
\red \llcorner &\llcorner& \lrcorner& \red \lrcorner
\end{array}
$
for each quadruple identified by a color.

\par
First, we focus on the { nesting accretion definition of Dyck words and extend it to pictures by considering a quadruple of corners.} The corresponding picture languages are called \emph{well-nested Dyck}, denoted as $DW_k$.
Then, we extend the neutralization rule to 2D in a way that essentially preserves the following property: two matching parentheses that encompass a neutral word can be neutralized. Now, the two matching parentheses become a quadruple of symbols representing the corners of a (rectangular) subpicture already neutralized.  The corresponding languages are called \emph{neutralizable Dyck} ($DN_k$).

\subsection{Well-nested Dyck language}
The natural question of what should be considered a Dyck-like language in 2D received a tentative answer in \cite{DBLP:journals/tcs/Crespi-ReghizziP05}, where a language of well-embedded rectangular boxes, called \textit{Chinese boxes}, was presented as an example of the generative power of the tile rewriting grammars there introduced.
\par
The alphabet is $\Gamma = \{\ulcorner,\urcorner,\llcorner, \lrcorner, \bullet\}$; the corner symbols represent the four box vertexes and a horizontal/vertical string of bullets represents a box side.
Instead of the original grammar formalism, we give a recursive definition.
\begin{definition}[Chinese boxes \cite{DBLP:journals/tcs/Crespi-ReghizziP05}]\label{defChineseEmbedding}
	Given a picture $p$ of size $(n,m)$, with $n,m\ge 0$, its {\em Chinese accretion } is the picture: 
	\[ \left(\ulcorner \, \obar \, \bullet^{\obar m} \, \obar \, \urcorner \right) \, \ominus \,\left(	\bullet^{\ominus n} \, \obar\, p \,\obar \,\bullet^{\ominus n} \right) 
	\ominus \left(\llcorner \,\obar\, \bullet^{\obar m} \, \obar \, \lrcorner \right) 
	\]
	i.e., 
	\scalebox{0.7}{
		$
		\begin{array}{ccccc}
		\ulcorner & \multicolumn{3}{c}{\bullet \dots \bullet}& \urcorner
		\\ 
		\bullet & \multicolumn{3}{c}{ \quad}&\bullet
		\\
		\vdots&     \multicolumn{3}{c}{ p }&	 \vdots
		\\
		\bullet& \multicolumn{3}{c}{ \quad}& \bullet
		\\ 
		\llcorner & \multicolumn{3}{c}{ \bullet \dots \bullet}&\lrcorner
		\\
		\end{array}
		\quad
		$
	}
	or the picture $\begin{array}{cc} \ulcorner & \urcorner \\ \llcorner&\lrcorner
	\end{array}
	$
	when $p$ is empty.
	\par\noindent
	The {\em Chinese boxes language}, denoted by $DB$, is defined as the smallest set including the empty picture and closed under horizontal and vertical concatenations,
	and under Chinese accretion. 
\end{definition}
An example is in Figure~\ref{figChinese+Accretion+Wellnested} (middle).
Initially erroneously believed not to be recognizable, a tiling system for language $DB$ was later found (see \cite{DBLP:journals/pr/PradellaC08}) thus proving the contrary.
Therefore, if we agree with the opinion that the analogue of regular 1D languages are the recognizable picture languages, then $DB$ does not qualify as a good 2D analogue of the Dyck 1D languages since the latter are not regular.
\par{
In addition we observe that a row word is similar to a Dyck word since the pairs $[\ulcorner, \urcorner]$ and $[\llcorner, \lrcorner]$ act as matching pairs, but it also contains the symbol $\bullet$ which has no matching symbol\footnote{Such a matchless symbol, when is added to a Dyck alphabet, is usually called \emph{neutral}.}; the case for column words is analogous, with the matching pair $[\ulcorner, \llcorner]$ and $[\urcorner, \lrcorner]$.  
} 
\par
The next definition extends to 2D the following property of Dyck 1D: 
 a Dyck word can be enlarged either by surrounding it with a matching pair of parentheses, or by concatenating it to another Dyck word. Similarly, we are going to define the well-nested 2D Dyck languages by means of accretion of a well-nested picture, or by concatenations--actually by the more general Simplot closure.
\par
 The alphabet consists of the corner symbols, that 
 we prefer to represent by Latin letters to simplify typography, with the correspondence: 
$a=\,_{\big\ulcorner},\,b=\,_{\big\urcorner}, \,c=\llcorner, \, d=\lrcorner$, thus the picture \scalebox{0.8}{$\begin{array}{cc} \ulcorner & \urcorner \\ \llcorner&\lrcorner\end{array}$} is the same as \scalebox{0.8}{$\begin{array}{cc} a & b\\ c&d\end{array}$}.
In general, as for Dyck word languages, we allow multiple copies of the four symbols, i.e., the alphabet is $\Delta_k=\{a_i,b_i,c_i,d_i\mid 1\le i \le k\}$, $k\ge 1$, and $\Delta_1 = \{a,b,c,d\}$.

\par
The following definition is similar to Definition~\ref{defChineseEmbedding} but substitutes corners to the bullets ``$\bullet$''.

\begin{definition}[well-nested Dyck picture language]
	\label{defDyckWellNested} 
	Let $\Delta_k=\{a_i,b_i,c_i,d_i\mid 1\le i \le k\}$. Define two bijections: 
	$h_r: \{a_i,b_i\}\to \{c_i,d_i\}$, $h_c: \{a_i,c_i\}\to \{b_i,d_i\}$ 
	with $h_r (a_i)=c_i$, $h_r (b_i)=d_i$ and 
	$h_c (a_i)=c_i$, $h_c (c_i)=d_i$.
	
\par\noindent	For every picture $p \in \Delta_k^{++}$, for all rows $w_r$ in the (word) Dyck language over the parentheses ${[a_i, b_i]}$, and for all columns $w_c$ in the 
 Dyck language over the parentheses ${[a_i, c_i]}$, such that $|w_r|=|p|_{col}$, $|w_c|=|p|_{row}$, 
	the {\em nesting accretion} of $p$ within $w_r, w_c$ is the picture: 
	\[ \left(a_i \, \obar \, w_r\,\obar\, b_i \right)\ominus \left(w_c\,\obar\, p\,\obar\, h_c\left(w_c\right) \right) \ominus \left(c_i\,\obar\, h_r\left(w_r\right)\,\obar\, d_i \right).
	\]
	
\par\noindent	The language $DW_k$ is the smallest set including the empty picture and closed under nesting accretion and Simplot closure (see \eqref{eqDefSimplot} in Section~\ref{s-preliminaries}).
\end{definition}
Figure~\ref{figChinese+Accretion+Wellnested} (right) 
 illustrates accretion and (left) shows a picture in $DW_1$; for comparison a Chinese box picture of the same size is shown in the middle.

\begin{figure}[htb]
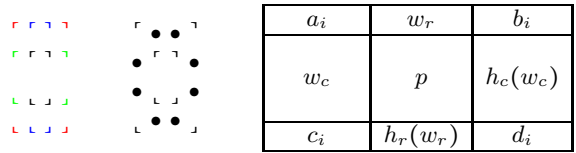

\begin{center}
$
\begin{array}{c c c c}
\red \ulcorner & \blue \ulcorner & \blue \urcorner& \red\urcorner 
\\
\green \ulcorner & \ulcorner & \urcorner &\green\urcorner 
\\
\green \llcorner & \llcorner &  \lrcorner & \green\lrcorner 
\\
\red \llcorner &\blue \llcorner & \blue \lrcorner & \red \lrcorner 
\end{array}
$
\qquad
$
\begin{array}{c c c c}
\ulcorner & \bullet & \bullet& \urcorner 
\\
\bullet & \ulcorner & \urcorner &\bullet
\\
\bullet & \llcorner &  \lrcorner & \bullet
\\
 \llcorner &\bullet& \bullet & \lrcorner 
\end{array}
$
\qquad
\setlength{\arraycolsep}{5pt}\renewcommand{\arraystretch}{1.0}
$
\begin{array}{|ccc|ccc|c|}\hline
&a_i&& \multicolumn{3}{c|}{w_r}& b_i
\\ \cline{1-7}
&& & \multicolumn{3}{c|}{ \quad }&
\\
&w_c&     & \multicolumn{3}{c|}{ p }&	 h_c(w_c)	
\\
&&& \multicolumn{3}{c|}{ \quad }&
\\ \cline{1-7}
&c_i& & \multicolumn{3}{c|}{ h_r(w_r)}&d_i
\\\hline
\end{array}
$
\end{center}
\caption{(Left) An example of picture in $DW_1$ and (middle) the similar Chinese box version. (Right) Scheme of nesting accretion.}~\label{figChinese+Accretion+Wellnested}
\end{figure}
 The definition can be explained intuitively by considering two distinct occurrences of a quadruple of matching corners: 
{ the subpictures delimited by each quadruple (i.e., their bounding boxes) are either disjoint, or included one into the other; or they overlap and a third box exists that ``minimally'' bounds both boxes. The third case is illustrated in Figure~\ref {figChinese+Accretion+Wellnested}, left, by the overlapping blue and green boxes.} 
\par It is immediate to see that for any size $(2m, 2n)$, $m,n \ge 1$, there is a picture in $DW_k$.
	\begin{theorem}\label{thm-wn-are rec}
		The language $DW_k$ is not (tiling system) recognizable, for every $k \ge 1$.
	\end{theorem}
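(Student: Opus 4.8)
I would contradict the hypothesis $DW_k \in \rec$ by invoking the standard necessary condition that, for a recognizable picture language, every slice of fixed height is a regular \emph{string} language. Precisely, I would first recall the following fact about $\rec$ (see~\cite{GiammRestivo1997}): if $L \subseteq \Sigma^{++}$ is in $\rec$, then for every fixed $m \ge 1$ the set $L_m = \{p \in L : |p|_{row}=m\}$, read column by column and regarded as a string language over the finite alphabet $\Sigma^{m}$ of height-$m$ columns, is regular. This holds because a tiling system, restricted to pictures of height $m$, scans columns left to right as a finite automaton whose state encodes one column of the local pre-image; the top/bottom border and the vertical dominoes are tested inside a column, the horizontal dominoes between adjacent columns, and the projection back to $\Sigma$ is letter-to-letter. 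I would apply this with the smallest useful height, $m=4$: height $2$ is useless, since the height-$2$ pictures of $DW_k$ are exactly the horizontal sequences of the basic coloured tiles and thus already regular.

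The heart of the argument is a family of \emph{height-$4$} pictures whose top row is an arbitrarily deeply nested Dyck word, obtained from a \emph{single} nesting accretion. For $t\ge 1$ let $P_t$ be the nesting accretion of the picture $\left(\begin{smallmatrix} a & b\\ c & d\end{smallmatrix}\right)^{\obar t}$ within $w_r=a^{t}b^{t}$ and $w_c=ac$; unfolding Definition~\ref{defDyckWellNested} gives a $4\times(2t+2)$ picture whose first row is $a^{\,t+1}b^{\,t+1}$, whose last row is $c^{\,t+1}d^{\,t+1}$, and whose two central rows are $a(ab)^{t}b$ and $c(cd)^{t}d$. Since $w_r$ is a Dyck word over $[a,b]$ and $w_c$ over $[a,c]$, $P_t$ is a legal accretion, hence $P_t\in DW_k$, and its height is $4$ for every $t$. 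The crucial point, and the step I expect to require most care, is exactly this: the horizontal nesting is carried entirely by the string $w_r$ and consumes no vertical space, so a fixed height does \emph{not} bound the nesting depth of the top row. In parallel I would record the easy invariant that the top row of \emph{any} picture of $DW_k$ is a Dyck word using only the opening/closing pair letters $a_i,b_i$ (never $c_i,d_i$): this holds for the basic tiles, is preserved by accretion (the new top row is $a_i\,w_r\,b_i$), and is preserved by Simplot closure (a concatenation of such Dyck words).

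Finally I would derive the contradiction. Assuming $DW_k\in\rec$, the height-$4$ slice $L_4$ is a regular language over the column alphabet $\Delta_k^{\,4}$. Let $\tau$ be the letter-to-letter morphism sending each height-$4$ column to its topmost symbol; then $\tau(L_4)$, the set of all top rows of height-$4$ pictures of $DW_k$, is regular. By the invariant every element of $\tau(L_4)$ is a Dyck word in the letters $\{a_i,b_i\}$, so intersecting with the regular set $a^{*}b^{*}$ (writing $a=a_1$, $b=b_1$) yields only words $a^{n}b^{m}$ that are balanced, i.e. with $n=m$; thus $\tau(L_4)\cap a^{*}b^{*}\subseteq\{a^{n}b^{n}\mid n\ge 1\}$. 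Conversely $\tau(P_{n-1})=a^{n}b^{n}$ for $n\ge 2$, and the vertical stack of two basic tiles realizes $ab$, so equality holds and $\tau(L_4)\cap a^{*}b^{*}=\{a^{n}b^{n}\mid n\ge 1\}$ would be regular, which is false. Hence $DW_k\notin\rec$ for every $k\ge 1$. The only delicate points are the precise statement of the fixed-height regularity lemma for $\rec$ and the verification that the deep-$w_r$ accretions $P_t$ genuinely lie in $DW_k$; everything else is routine closure of regular languages under letter-to-letter morphism and under intersection.
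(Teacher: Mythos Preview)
Your argument is correct and takes a genuinely different route from the paper's. Both proofs exploit the same underlying phenomenon—height-$4$ pictures of $DW_k$ whose top row is $a^{n}b^{n}$, while every top row of a $DW_k$ picture is forced to be a Dyck word in the letters $a_i,b_i$—but they extract the contradiction with different tools. The paper stays inside the $2$D toolkit: it builds a recognizable filter $R$ whose pictures have top row in $a^{+}b^{+}$, intersects $DW_1$ with $R^{\ominus +}$, and then applies the Horizontal Iteration Lemma of~\cite{GiammRestivo1997} to pump a column block and break the balance $a^{n}b^{n}$. You instead reduce to one dimension via the standard fact that a fixed-height slice of a $\rec$ language is regular over the column alphabet, project to the top symbol, and invoke the non-regularity of $\{a^{n}b^{n}\}$. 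Your route is more elementary (only the $1$D pumping lemma is needed) and makes the top-row invariant explicit—an invariant the paper's proof also relies on when asserting that the first row of $T^{\ominus+}$ has the form $a^{n}b^{n}$, but does not spell out. The paper's route, on the other hand, packages everything into a single recognizable language $T^{\ominus+}$ that is reused almost verbatim for $DN_k$ and $DC_k$ in Theorems~\ref{thm-neut-arenot-rec} and~\ref{theorRECincomparableDC}, so it amortizes better across the three non-recognizability results.
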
 	
	\begin{proof}
	By contradiction, assume that $DW_k$ is recognizable. Without loss of generality, we consider only the case $k=1$.
	Consider the following picture $p$ in $DW_1$: $\begin{array}{cc} a & b\\ c&d
	\end{array}
	$.
	From closure properties of REC, the language 	$p^{\obar +}$ is recognizable, 
	hence also the language: 
	$$
	R=\left(a^{\obar +}\, \obar\, b^{\obar +}\right) \ominus \left((a\, \ominus \, c) \obar\, p^{\obar +}\, \obar\, (b \, \ominus\, d) \right) \ominus \left(c^{\obar +}\, \obar\, d^{\obar +}\right). 
	$$ 
	A picture in $R$ has $a^+ b^+$ in the top row and $c^+ d^+$ in the bottom row. 
	Let $T$ be the language obtained by intersection of $DW_1$  with $R^{\ominus +}$. Therefore, both $T$ and $T^{\ominus+}$ are also recognizable; moreover, the first row of every picture in $T^{\ominus+}$ has the form $a^nb^n$. 
	By applying the Horizontal Iteration Lemma of \cite{GiammRestivo1997} (Lemma 9.1) to $T^{\ominus+}$, there exists a (suitably large) picture $t$ in $T^{\ominus+}$ which can be written as the horizontal concatenation of the three (non empty) pictures $x,q, y$, namely $t=x \obar q \obar y$, such that $x \obar q^{i\obar} \obar y$ is also in $T^{\ominus+}$, a contradiction with the fact that the top row of the pictures in $T^{\ominus+}$ must be of the form $a^n b^n$.
	\end{proof}
\par

\subsection{Neutralizable Dyck language}
We investigate a possible definition of Dyck picture languages by means of a neutralization rule analogous to the congruence \eqref{eqNeutralization1DDyck} of Dyck word languages.

\begin{definition}[neutralizable Dyck language]\label{def2DDyckByNeutralization}
	Let $N$ be a new symbol not in $\Delta_k$. 
	The {\em neutralization relation } $\stackrel{\nu}\to \subseteq \left(\{N\}\cup\Delta_k \right)^{++}\times \left(\{N\}\cup\Delta_k \right)^{++}$, 
	is the smallest relation such that for every pair of pictures $p,p'$ in $\left(\{N\}\cup\Delta_k\right)^{++}$, $p \stackrel{\nu}\to p'$ if there are $m,n\ge 2$ and $1\le i \le k$, such that $p'$ is obtained from
	$p$ by replacing a subpicture of $p$ 
	of the form: 
	\begin{equation}\label{eq-neutralization}
		(a_i \ominus N^{m-2,1} \ominus c_i)\obar N^{m,n-2}\obar (b_i \ominus N^{m-2,1} \ominus d_i).
	\end{equation}
with the picture of the same size $N^{m,n}$.
	\par\noindent
	The 2D \emph{neutralizable Dyck language}, denoted with $DN_k\subseteq \Delta_k^{++}$, is the set of pictures $p$ such that there exists $p' \in N^{++}$ with $p \stackrel{\nu}{\to}^+ p'$.
\end{definition}
In other words, a $DN_k$  picture is transformed into a picture in  $N^{++}$ by a series of neutralizations.
It is obvious that the order of application of the neutralization steps is irrelevant for deciding if a picture is neutralizable. 

\begin{example}[neutralizations]\label{exNeutralizations}
	The following picture $p_1$ on the alphabet $\Delta_1$ is in $DN_1$  since it reduces to the neutral one by means of a sequence of six neutralization steps:
	\begin{center}
		\scalebox{0.95}{
		$
		p_1 =
		\begin{array}{c c c c c c}
		\ulcorner & \ulcorner & \urcorner & \ulcorner & \urcorner & \urcorner 
		\\
		 \ulcorner & \ulcorner & \urcorner & \llcorner &\lrcorner & \urcorner 
		\\
		 \llcorner & \llcorner & \lrcorner & \ulcorner &\urcorner & \lrcorner 
		\\
		\llcorner & \llcorner & \lrcorner & \llcorner &\lrcorner & \lrcorner 
		\end{array}
		\stackrel{\nu}\to
		\begin{array}{c c c c c c}
		\ulcorner & \ulcorner & \urcorner & \ulcorner &\urcorner & \urcorner 
		\\
		\ulcorner & N & N & \llcorner &\lrcorner & \urcorner 
		\\
		\llcorner & N & N & \ulcorner &\urcorner & \lrcorner 
		\\
		\llcorner & \llcorner & \lrcorner & \llcorner &\lrcorner & \lrcorner 
		\end{array}
		\stackrel{\nu}\to
		\begin{array}{c c c c c c}
		\ulcorner & N & N & \ulcorner &\urcorner & \urcorner 
		\\
		\ulcorner & N & N & \llcorner &\lrcorner & \urcorner 
		\\
		\llcorner & N & N & \ulcorner &\urcorner & \lrcorner 
		\\
		\llcorner & N & N & \llcorner &\lrcorner & \lrcorner 
		\end{array}
				\stackrel{\nu}\to
		\begin{array}{c c c c c c}
		\ulcorner & N & N & N &N & \urcorner
		\\
		\ulcorner & N & N & N &N & \urcorner 
		\\
		\llcorner & N & N & \ulcorner &\urcorner & \lrcorner 
		\\
		\llcorner & N & N & \llcorner &\lrcorner & \lrcorner 
		\end{array}
				$
		}
		\\\medskip
		\scalebox{0.95}{
		$
		\stackrel{\nu}\to
		\begin{array}{c c c c c c}
		\ulcorner & N & N & N &N & \urcorner
		\\
		\ulcorner & N & N & N &N & \urcorner 
		\\
		\llcorner & N & N & N &N & \lrcorner 
		\\
		\llcorner & N & N & N &N & \lrcorner
		\end{array}
		\stackrel{\nu}\to
		\begin{array}{c c c c c c}
		\ulcorner & N & N & N &N & \urcorner
		\\
		N & N & N & N &N & N 
		\\
		N & N & N & N &N & N 
		\\
		\llcorner & N & N & N &N & \lrcorner
		\end{array}
		\stackrel{\nu}\to
		\begin{array}{c c c c c c}
		N & N & N & N &N & N 
		\\
		N & N & N & N &N & N 
		\\
		N & N & N & N &N & N 
		\\
		N & N & N & N &N & N 
		\end{array}
		$
		}
	\end{center}
	Neutralizations have been arbitrarily applied in top to bottom, left to right order. 
\end{example}
\par
 By a proof almost identical to the one of Theorem~\ref{thm-wn-are rec}, since the language $T^{\ominus+}$ can be obtained from $DN_k$  by intersection
with a recognizable language, we have:
	\begin{theorem}\label{thm-neut-arenot-rec}
		The language $DN_k$  is not (tiling system) recognizable for every $k\ge 1$.
	\end{theorem}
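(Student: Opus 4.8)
The plan is to reproduce almost verbatim the argument of Theorem~\ref{thm-wn-are rec}, replacing $DW_1$ by $DN_1$ and supplying the one extra fact the construction needs. Arguing by contradiction, I would assume $DN_1$ recognizable (it suffices to treat $k=1$, since extra corner quadruples only add symbols that never interact with a single pair). I then reuse the very same recognizable language $R$ built in the proof of Theorem~\ref{thm-wn-are rec}, whose pictures carry $a^+b^+$ on their top row and $c^+d^+$ on their bottom row, and I set $T = DN_1 \cap R^{\ominus+}$. By closure of REC under intersection, together with the recognizability of $R^{\ominus+}$, both $T$ and $T^{\ominus+}$ are recognizable.

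The only ingredient not already available in Theorem~\ref{thm-wn-are rec} is that every row of a $DN_1$ picture is a (word) Dyck word over the row pairs $[a,b]$ and $[c,d]$. I would establish this by projecting the neutralization process onto a fixed row. Each neutralization step \eqref{eq-neutralization} acts on a rectangle whose top row reads $a\,N^{\,n-2}\,b$, whose bottom row reads $c\,N^{\,n-2}\,d$, and whose remaining rows are already wholly neutral. Hence, restricted to any single row, the step either leaves that row unchanged or rewrites $a\,N^{m}\,b \to N^{m+2}$ (respectively $c\,N^{m}\,d \to N^{m+2}$), which is precisely the one-dimensional neutralization rule \eqref{eqNeutralization1DDyck}. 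Since a picture in $DN_1$ reduces to a picture in $N^{++}$, each of its rows reduces to $N^{+}$ under these induced one-dimensional steps and is therefore a Dyck word by the neutralization definition of $D_k$. Intersecting with $R^{\ominus+}$, whose top row has the shape $a^+b^+$, and recalling that the only Dyck words of that shape are those with as many $a$'s as $b$'s, I conclude that the top row of every picture of $T$—and hence of every picture of $T^{\ominus+}$, since vertical concatenation preserves the topmost row—has the form $a^n b^n$.

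With this in hand the contradiction is identical to the one in Theorem~\ref{thm-wn-are rec}. The language $T^{\ominus+}$ is recognizable and infinite: the pictures exhibited in the proof of Theorem~\ref{thm-wn-are rec} are well-nested, hence neutralizable (using the inclusion $DW_1 \subseteq DN_1$), so they lie in $DN_1 \cap R^{\ominus+} = T$ and provide arbitrarily large members of $T^{\ominus+}$. Applying the Horizontal Iteration Lemma of \cite{GiammRestivo1997} to a suitably large $t \in T^{\ominus+}$, I obtain a factorization $t = x \obar q \obar y$ with $x \obar q^{i\obar} \obar y \in T^{\ominus+}$ for all $i$, which pumps the number of occurrences in the top row and breaks the balance $a^n b^n$, a contradiction.

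I expect the only genuine work to be the row-projection claim; everything else is inherited from Theorem~\ref{thm-wn-are rec}. The two points to check with care are, first, that each neutralization step restricts on a single row to either the identity or the 1D rule \eqref{eqNeutralization1DDyck}—this is exactly where the precise shape of \eqref{eq-neutralization}, with neutral sides and neutral interior, is used—and second, that $T^{\ominus+}$ remains infinite after the intersection, for which the inclusion $DW_1 \subseteq DN_1$ is the convenient guarantee.
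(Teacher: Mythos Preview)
Your proposal is correct and follows precisely the approach the paper intends: the paper's own proof consists of a single sentence saying the argument of Theorem~\ref{thm-wn-are rec} goes through verbatim because the same language $T^{\ominus+}$ can be obtained from $DN_k$ by intersection with a recognizable language. You have simply spelled out the one point the paper leaves implicit, namely that every row of a $DN_1$ picture is a Dyck word over $\Delta^{Row}_1$, and your row-projection argument for this (each neutralization step restricts on a fixed row to either the identity or an instance of the 1D rule~\eqref{eqNeutralization1DDyck}) is clean and correct.
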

 
\par
Although $DW_k$ is defined by a diverse mechanism, the next inclusion is immediate. 
\begin{theorem}
	The language $DW_k$  is strictly included in $DN_k$ for every $k\ge 1$.
\end{theorem}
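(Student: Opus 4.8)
The plan is to prove the two inclusions separately: first the containment $DW_k \subseteq DN_k$, which is the easy direction, and then strictness by exhibiting a witness in $DN_k \setminus DW_k$. For the containment I would argue by structural induction following Definition~\ref{defDyckWellNested}, showing that every non-empty picture of $DW_k$ reduces to a picture of $N^{++}$ under $\stackrel{\nu}{\to}$ (the empty picture lies outside $\Delta_k^{++}$ and is ignored). The base case is the $2\times 2$ picture with top row $\ulcorner\urcorner$ and bottom row $\llcorner\lrcorner$, neutralized by one application of rule~\eqref{eq-neutralization} with $m=n=2$.

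For the nesting-accretion step, given the accretion of an already-neutralizable picture $p$ within $w_r,w_c$, I would first replay (by the induction hypothesis) the reduction of the inner copy of $p$, turning the whole interior neutral; then neutralize the top/bottom borders by treating each matching pair $a_j\ldots b_j$ of the row word $w_r$ (matched on the bottom row by $c_j\ldots d_j$ in $h_r(w_r)$) as a full-height stripe, applied innermost pair first so that the cells strictly inside each stripe are already neutral and the stripe matches the shape of~\eqref{eq-neutralization} with $n=2$; then neutralize the left/right borders symmetrically, each matching pair of $w_c$ yielding a full-width stripe. This leaves only the four outer corners $a_i,b_i,c_i,d_i$ over a neutral interior, removed by one final application of~\eqref{eq-neutralization}. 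For the Simplot step I would invoke locality of neutralization: a reduction witnessing $q_\ell\in DN_k$ uses only frames contained in the domain of the tile $q_\ell$, so the same steps replay verbatim on each tile of a tessellation~\eqref{eqDefSimplot}, reducing the whole picture tile by tile. This is why the inclusion is genuinely routine.

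For strictness I would take the picture $p_1$ of Example~\ref{exNeutralizations}, already shown to lie in $DN_1$, and prove $p_1\notin DW_1$; since $\Delta_1\subseteq\Delta_k$ and all the arguments below hold over $\Delta_k$, the same witness settles every $k\ge 1$. The key tool is a simple invariant: every row (resp. column) of a picture of $DW_k$ is a Dyck word over the row pairs $[a_i,b_i],[c_i,d_i]$ (resp. the column pairs $[a_i,c_i],[b_i,d_i]$). This is preserved by nesting accretion (the new border rows and columns are Dyck words surrounding, or single parentheses prepended or appended to, Dyck words, as one checks case by case using $h_r,h_c$) and by Simplot closure (a full row or column of the whole picture is the concatenation of full rows or columns of the tiles it crosses, hence a concatenation of Dyck words). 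Granting the invariant, I rule out the two possible final construction steps for $p_1$. If $p_1$ were a nesting accretion, its interior is forced to be the $2\times 4$ subpicture on rows $2$--$3$, columns $2$--$5$, whose rows are $\ulcorner\urcorner\llcorner\lrcorner$ and $\llcorner\lrcorner\ulcorner\urcorner$; but its third column reads $\llcorner\ulcorner$, which is not a column Dyck word, so this interior cannot lie in $DW_1$, a contradiction. If $p_1$ were a nontrivial Simplot tessellation, consider the tile $Q$ whose top-left cell is the global cell $(1,1)$: the first row of $Q$ is a prefix of $\ulcorner\ulcorner\urcorner\ulcorner\urcorner\urcorner$ and must be a row Dyck word, which forces $Q$ to span all six columns; symmetrically its first column is a prefix of $\ulcorner\ulcorner\llcorner\llcorner$ and must be a column Dyck word, forcing $Q$ to span all four rows. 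Hence $Q=p_1$, contradicting the presence of at least two tiles.

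The real work is the non-membership $p_1\notin DW_1$. The crux, and the main obstacle, is isolating the right invariant — that rows and columns are Dyck words — so that the forced-interior argument dismisses the accretion case and the forced top-left-tile argument dismisses \emph{every} Simplot tessellation in one stroke; without such an invariant one would be left with an unwieldy enumeration of the tilings of the $4\times 6$ grid into admissible boxes.
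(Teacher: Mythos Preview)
Your argument is correct, but it takes a noticeably heavier route than the paper. The paper's own proof is two lines: the inclusion is declared obvious, and for strictness it exhibits the $2\times 4$ picture
\[
p_N=\begin{array}{cccc}\ulcorner&\ulcorner&\urcorner&\urcorner\\\llcorner&\llcorner&\lrcorner&\lrcorner\end{array}
\]
and simply observes that $p_N$ cannot arise by nesting accretion (the only accretion with two rows is the $2\times2$ box, and the only nontrivial Simplot split of a $2\times4$ picture would need the halves $\begin{smallmatrix}a&a\\c&c\end{smallmatrix}$ and $\begin{smallmatrix}b&b\\d&d\end{smallmatrix}$, neither of which is in $DW_1$). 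You instead pick the $4\times6$ picture $p_1$ of Example~\ref{exNeutralizations} and, to exclude it from $DW_1$, first isolate and prove the invariant that every row and column of a $DW_k$ picture is a Dyck word; this is essentially the easy direction of what the paper later records as Theorem~\ref{theorRow/ColumnLanguages}. Your invariant then lets you dispatch the accretion case (the forced interior has a non-Dyck column) and every Simplot tessellation at once (the top-left tile is forced to be the whole picture). Both arguments are sound; the paper's buys brevity by choosing the smallest possible witness, while yours buys a reusable structural tool and treats the Simplot case explicitly rather than leaving it implicit.
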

\begin{proof}
The inclusion $DW_k \subseteq DN_k$ is obvious since any picture in $DW_k$  can be neutralized in accordance with Definition~\ref{def2DDyckByNeutralization}.
Then the thesis follows since the neutralizable picture 
$
p_N =
\begin{array}{c c c c}
\ulcorner & \ulcorner & \urcorner & \urcorner 
\\
\llcorner & \llcorner & \lrcorner &\lrcorner 
\\
\end{array}
$
\; cannot be obtained using nesting accretion. 
\end{proof}
 Another picture in $DN_1 \setminus DW_1$ is in Figure~\ref{fig-rectangles}.

\section{Row-column combination of Dyck languages}\label{sectDyckCrosswords}
We consider the pictures such that their rows and columns are Dyck languages, more precisely, they are Dyck word languages over the same alphabet but with different pairing of terminal characters. Such pictures, called Dyck crosswords, may be viewed as analogous of Dyck word languages.

\par
Following \cite{GiammRestivo1997} we introduce the row-column combination operation that takes two word languages and produces a picture language.
\begin{definition}[row-column combination a.k.a. crossword]\label{defRowColCombination}
	Let $S' , S'' \subseteq \Sigma^*$ be two word languages, called \emph{component languages}. The \emph{row-column combination} or \emph{crossword} of $S'$ and $S''$ is the picture language $L$
	such that a picture $p \in \Sigma^{++}$ belongs to $L$ if, and only if, the words corresponding to each row (in left-to-right order) and to each column (in top-down order) of $p$ belong to $S'$ and $S''$, respectively. 
	\par\noindent
\end{definition}

{The row-column combination of regular languages has received attention in the past since its alphabetic projection exactly coincide with the REC family~\cite{GiammRestivo1997}; some complexity issues for this case are addressed in the recent paper~\cite{DBLP:journals/iandc/FennerPT22} where the combinations are called ``regex crosswords''. 
Moreover, given two regular languages $S',S''$, it is undecidable to establish whether their composition is empty.}
In this section, we investigate the properties of the row-column combination of a fundamental type of context-free languages, the Dyck ones.

\par
 The picture alphabet is the same of $DW_k$  and $DN_k$  languages, here preferably represented by letters instead of corner symbols.

\begin{definition}[Dyck crossword alphabet and language]\label{defCrosswordAlphabet}
Let $\Delta_k=\{a_i, b_i, c_i, d_i \mid 1 \le i \le k\}$, an alphabet. We associate $\Delta_k$ with two different Dyck alphabets, the \emph{Dyck row alphabet} 
	$\Delta^{Row}_k$ and the \emph{Dyck column} \emph{alphabet} $\Delta^{Col}_k$ by means of the following matching pairs:
\[
\left\{
\begin{array}{ll}
\text{for }	\Delta^{Row}_k : & \left\{[ a_i, b_i] \mid i \le 1 \le k \right\} \cup\left \{ [c_i,d_i]\mid 1 \le i \le k \right\}
\\
\text{for }	\Delta^{Col}_k : & \left \{[ a_i, c_i] \mid i \le 1 \le k \right\} \cup \left\{ [b_i,d_i]\mid 1 \le i \le k \right\}
\end{array}
\right. .
\]

%
The corresponding Dyck languages, without $\varepsilon$, are denoted by $D^{Row}_k \subset {\Delta_k}^+$ and $D^{Col}_k \subset {\Delta_k}^+$.

	\par
	The \emph{Dyck crossword} language $DC_k$ is the row-column combination of $D^{Row}_k$ and $D^{Col}_k$. 
\end{definition}

In the following, we often consider only the language $DC_1$, over alphabet $\{a,b,c,d\}$, when statements and properties of $DC_k$ are straighforward generalizations of the $DC_1$ case. 
\begin{remark}
The choice in Definition~\ref{defCrosswordAlphabet} that the $DC_k$ alphabet $\Delta_k$ consists of one or more quadruples $a_i,b_i,c_i,d_i$, $1 \le i \le k$, is not just for continuity with the alphabet of the well-nested and neutralizable cases, but it is imposed by the following simple facts. For brevity we consider $k=1$.
\begin{enumerate}[(i)]
\item Let $\Gamma$ be the binary alphabet $\{e, e'\}$. Let $S'$ and $S''$ be the Dyck languages respectively for rows and for columns based on  $\Gamma$, with matching parentheses $(e,e')$ for both rows and columns. 
Then, it is easy to see that the row-column combination of $S'$ and $S''$ is empty, since it is impossible to complete a $DC$ picture starting from a row containing word $e e'$.
 Moreover, the combination remains empty if we invert the matching for columns to $(e', e)$.
	
\item Let the alphabet for words be  $\Gamma= \{e, e', f, f'\}$. Then, to obtain a non-empty combination, there is only one way (disregarding trivially equivalent letter permutations) of matching the letters, namely: for rows, $(e,e'), (f, f')$ and for columns $(e,f), (e', f')$.
For instance, the choice $(e,f'), (e', f)\}$ for columns does not produce any $DC_1$ picture.
By renaming the letters of $\Gamma$ as $\Delta_1= \{a, b, c, d \}$ we regain the row/column Dyck alphabets of Definition~\ref{defCrosswordAlphabet}; then, the matching $\Delta^{Row}_1=\{a,b\}\cup \{c, d \}$ and
$\Delta^{Col}_1=\{a, d\}\cup \{b, c\}$ makes $DC_1$ empty.

\item Let the alphabet for words have six letters $\Gamma= \{e, e', f, f', g, g'\}$. From part (i) it is easy to see that, no matter what matching is chosen for row and columns, two of the letters cannot occur in any picture of $DC_1$. Therefore, it is enough to consider an alphabet of size multiple of four.
\item A consequence of the previous items is that the following property of Dyck words over a binary alphabet $\{e, e'\}$ does not hold for $DC_1$: any Dyck word, e.g., $e' e$, occurs as a factor of some Dyck word, e.g., $e\, e' e\, e'$; this is not true for the rows and the columns of Dyck crosswords because each one of the row/column Dyck alphabets contains two pairs of symbols, not just one. For instance the word $ad$ is a forbidden factor of language $D^{Row}_1$. 
\end{enumerate}
\end{remark}

\par
We state and prove some basic properties.
It is easy to notice that $DN_k\subseteq DC_k$: for instance, when neutralizing a subpicture, the neutralization of its two corners $(a_i,b_i)$ acts in that row as the neutralization rule for words in $D_k^{row}$, and similarly for the other corners. We later prove that this inclusion is proper. 
\par
The result of not being tiling recognizable holds also for $DC_k$:
\begin{theorem}\label{theorRECincomparableDC}
	For every $k\ge 1$, the language $DC_k$ is not (tiling system) recognizable.
\end{theorem}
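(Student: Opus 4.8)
The plan is to reuse, almost verbatim, the argument of Theorem~\ref{thm-wn-are rec}, exploiting the fact that the decisive step there uses only the property that the rows of the pictures are row-Dyck words---and this holds \emph{by definition} for every picture of $DC_k$.

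First I would reduce to the case $k=1$ and argue by contradiction, assuming $DC_1\in\rec$. I would take the very same recognizable language $R$ constructed in the proof of Theorem~\ref{thm-wn-are rec}, whose pictures have top row in $a^{\obar +}\obar b^{\obar +}$ and bottom row in $c^{\obar +}\obar d^{\obar +}$. Since $R\in\rec$ and $\rec$ is closed under $\ominus +$, the language $R^{\ominus +}$ is recognizable; under the contradiction hypothesis and closure under intersection, so is $T=DC_1\cap R^{\ominus +}$, and hence also $T^{\ominus +}$.

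The key observation is the following. The top row of any picture in $R^{\ominus +}$ is the top row of its uppermost $R$-block, hence a word $a^i b^j$ with $i,j\ge 1$; but membership in $DC_1$ forces that row to belong to $D^{Row}_1$, and the word $a^i b^j$ is a row-Dyck word only when $i=j$. Consequently every picture of $T$, and therefore every picture of $T^{\ominus +}$, has first row exactly $a^n b^n$. That $T$ contains arbitrarily wide pictures (so that the iteration lemma applies) is inherited from the inclusion $DW_1\subseteq DC_1$: it gives $DW_1\cap R^{\ominus +}\subseteq T$, and the left-hand side is precisely the infinite family used in Theorem~\ref{thm-wn-are rec}.

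Finally I would apply the Horizontal Iteration Lemma of \cite{GiammRestivo1997} (Lemma 9.1) to the recognizable language $T^{\ominus +}$: a sufficiently wide picture $t\in T^{\ominus +}$ decomposes as $t=x\obar q\obar y$ with $q$ a non-empty vertical strip such that $x\obar q^{i\obar}\obar y\in T^{\ominus +}$ for every $i$. Pumping modifies the top row $a^n b^n$ by duplicating its restriction to the columns of $q$, which is a non-empty factor of $a^n b^n$; whatever the position of this factor (entirely inside the $a$-block, entirely inside the $b$-block, or straddling the boundary) the pumped top row ceases to have the form $a^m b^m$, contradicting the previous paragraph. The only point requiring genuine attention---and the nearest thing to an obstacle---is to confirm that the single language $R$ of Theorem~\ref{thm-wn-are rec} still does its job in the $DC_1$ setting: it must simultaneously keep the intersection with $DC_1$ non-empty and large, and make the forced top-row shape $a^i b^j$ collapse to $a^n b^n$ under the Dyck-row constraint. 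Both facts follow from the observations above, so the argument transfers without further effort.
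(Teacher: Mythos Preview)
Your proposal is correct and follows essentially the same approach as the paper's own proof, which simply states that the argument of Theorem~\ref{thm-wn-are rec} carries over because the language $T^{\ominus+}$ can again be obtained from $DC_1$ by intersection with a recognizable language. You have merely made explicit the two points the paper leaves implicit: that the row-Dyck constraint forces the top row $a^ib^j$ to satisfy $i=j$, and that the inclusion $DW_1\subseteq DC_1$ guarantees $T$ contains arbitrarily wide pictures so that the Horizontal Iteration Lemma can be invoked.
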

\begin{proof}
The proof is essentially the same as of Theorem~\ref{thm-wn-are rec}, since also in this case the language $T^{\ominus+}$ can be obtained from $DC_1$ by intersection
with a recognizable language.
\end{proof}

{ The next property of $DC_k$ is that any picture $p$ that is partitioned into  $DC_k$ subpictures is also in $DC_k$. This is obvious since each row of $p$ is the concatenation of Dyck words, and similarly for columns. An analogous result holds for each language $DN_k$ (for $DW_k$ this holds by definition). }
	\begin{theorem}[Invariance under Simplot operation]\label{theorClosureDCunderSimplot}
		$(DC_k)^{++} = DC_k$ and $(DN_k)^{++} = DN_k$.
	\end{theorem}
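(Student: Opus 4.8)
The plan is to prove both equalities by establishing the nontrivial inclusions $(DC_k)^{++}\subseteq DC_k$ and $(DN_k)^{++}\subseteq DN_k$; the reverse inclusions are immediate, since any picture $p$ is tessellated by the single domain covering all of $p$, so $L\subseteq L^{**}$ holds for every language $L$.

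For $DC_k$, I would fix a picture $p\in (DC_k)^{++}$ together with a partition of its grid into domains $d_1,\dots,d_n$ such that each $spic(p,d_j)\in DC_k$, and argue row by row (the column case being symmetric). The key observation is that, although the tessellation may be irregular, its restriction to a fixed row index $i$ is well behaved: each domain is a rectangle, so the set of columns that it contributes to row $i$ is an interval, equal to the full column-range of that domain. Since the domains partition the grid, these intervals partition $\{1,\dots,|p|_{col}\}$, and the content of $p$ on each such interval is exactly a complete row of the corresponding subpicture $spic(p,d_j)$, hence a word in $D^{Row}_k$. Reading the intervals left to right, row $i$ of $p$ is a concatenation of words in $D^{Row}_k$; since Dyck word languages are closed under concatenation, row $i$ lies in $D^{Row}_k$. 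The symmetric argument on columns shows that every column lies in $D^{Col}_k$, whence $p\in DC_k$.

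For $DN_k$, I would exploit the locality of the neutralization relation. Given $p\in (DN_k)^{++}$ with a tessellation into domains $d_1,\dots,d_n$ where each $spic(p,d_j)\in DN_k$, every such subpicture admits a sequence of steps $\stackrel{\nu}\to$ reducing it to an all-$N$ picture. Each step rewrites a rectangular region matching the pattern of \eqref{eq-neutralization}, and verifying the pattern inspects only the cells of that region; hence each step, viewed inside $p$, touches only cells of $d_j$ and is a legitimate step of $\stackrel{\nu}\to$ on $p$. Processing the domains one at a time—and observing that, as the $d_j$ are pairwise disjoint, rewrites in distinct domains never interfere—reduces each region of $p$ to all-$N$. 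After all domains are processed, every cell of $p$ has become $N$, so $p\stackrel{\nu}{\to}^+ p'$ for some $p'\in N^{++}$, that is, $p\in DN_k$.

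I expect the point requiring the most care to be the $DC_k$ direction, specifically the claim that an arbitrary Simplot tessellation restricts, along any single row or column, to a partition into complete rows (respectively columns) of the tiling subpictures; once this is made precise, closure of Dyck word languages under concatenation completes the argument. For $DN_k$ the essential ingredient is simply that the neutralization pattern is checked locally within a rectangle, so reductions confined to disjoint domains can be carried out independently and composed in any order.
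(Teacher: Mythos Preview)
Your proposal is correct and follows the same approach as the paper, which in fact offers only a one-sentence justification preceding the theorem: ``This is obvious since each row of $p$ is the concatenation of Dyck words, and similarly for columns. An analogous result holds for each language $DN_k$.'' Your write-up is a faithful elaboration of that sketch, making explicit the row-restriction argument for $DC_k$ and the locality-of-neutralization argument for $DN_k$; nothing is missing.
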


\par                                               
{ Another question for any of the Dyck-like picture languages introduced is whether its row and column languages respectively saturate the horizontal and vertical Dyck word languages. We prove that this is the case for $DN_k$ and $DC_k$, but this is not for $DW_k$.
	{ Let $\Delta_k=\{a_i,b_i,c_i,d_i\mid 1\le i \le k\}$. Let $P\subseteq \Delta^{++}_k$ be a picture language and define the \emph{row language} of $P$ as: 
		$\text{ROW}(P)= \{w \in \Delta^+_k \mid \text{ there exist } p\in P, p',p''\in \Delta_k^{++} \text{ such that } p=p'\ominus w \ominus p''\}$. The column language of $P$, $\text{COL}(P)$ is defined analogously.}
	                                                
	\begin{theorem}[row/column languages]~\label{theorRow/ColumnLanguages}             
		\begin{enumerate}                                       
			\item $\text{ROW}(DC_k)=\text{ROW}(DN_k) = D^{Row}_k$, $\text{COL}(DC_k)=\text{COL}(DN_k)= D^{Col}_k$.
			\item $\text{ROW}(DW_k) \subsetneq D^{Row}_k$, $\text{COL}(DW_k) \subsetneq D^{Col}_k$.    
		\end{enumerate}                                        
                                                 
	\end{theorem}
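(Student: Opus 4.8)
The plan is to prove the three equalities of item~1 by a short chain of inclusions, and item~2 by exhibiting one explicit witness word together with two structural facts about $DW_k$; the column statements then follow from the row statements by a transpose symmetry. Throughout I use that every row of a picture in $DC_k$ is by definition a word of $D^{Row}_k$ and that $DN_k\subseteq DC_k$. Hence the easy inclusions $\text{ROW}(DN_k)\subseteq\text{ROW}(DC_k)\subseteq D^{Row}_k$ are immediate (an internal row is in particular a row), and item~1 reduces to the single inclusion $D^{Row}_k\subseteq\text{ROW}(DN_k)$, which closes the chain.

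For that inclusion, given $w=x_1\cdots x_n\in D^{Row}_k$ I would build a height-$4$ picture $P(w)$ having $w$ as its internal second row, by concatenating horizontally one vertical strip per letter: reading top-to-bottom, $a_i\mapsto(a_i,a_i,c_i,c_i)$, $b_i\mapsto(b_i,b_i,d_i,d_i)$, $c_i\mapsto(a_i,c_i,a_i,c_i)$, $d_i\mapsto(b_i,d_i,b_i,d_i)$. By construction the second row of $P(w)$ is $w$, and it is internal since the first row and the last two rows are nonempty over $\Delta_k$. The core is to show $P(w)\in DN_k$, by induction on $|w|$ along the prime/concatenation decomposition of Dyck words. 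If $w=w_1w_2$ with both factors nonempty, then $P(w)=P(w_1)\obar P(w_2)$ and I invoke $(DN_k)^{++}=DN_k$ (Theorem~\ref{theorClosureDCunderSimplot}). If $w=a_iXb_i$ is prime, I first reduce the middle block $P(X)$ to $N^{4,|X|}$ by the inductive hypothesis (the steps act only on the middle columns, so they are legitimate neutralizations of the whole picture), then clear the frame by two applications of \eqref{eq-neutralization}, one on the central two rows with $m=2$ and one on all four rows with $m=4$; the case $w=c_iXd_i$ is analogous, clearing the top and bottom pairs of rows separately. This gives $w\in\text{ROW}(DN_k)$, hence item~1 for rows.

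For item~2 the easy inclusion $\text{ROW}(DW_k)\subseteq D^{Row}_k$ holds because $DW_k\subseteq DC_k$. I would witness strictness with $w=a_1b_1c_1d_1=(a_1b_1)(c_1d_1)\in D^{Row}_k$, proving that \emph{no} width-$4$ picture of $DW_k$ has $w$ as a row. The argument rests on two facts read off Definition~\ref{defDyckWellNested}: (i) in a picture obtained by a single nesting accretion, the top and bottom rows use only $\{a_i,b_i\}$, resp.\ only $\{c_i,d_i\}$, while every internal row has the shape $\alpha\,z\,\beta$ with $(\alpha,\beta)$ a row-matching pair $[a_i,b_i]$ or $[c_i,d_i]$ (since the left and right frame columns contribute $w_c[t]\in\{a_i,c_i\}$ and its $h_c$-image); and (ii) every width-$2$ picture of $DW_k$ is a vertical stack of blocks $\left(\begin{smallmatrix}a_i&b_i\\c_i&d_i\end{smallmatrix}\right)$, so its odd rows carry only the column-opening symbols $a_i,b_i$ and its even rows only the column-closing symbols $c_i,d_i$. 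Because the extreme letters $a_1,d_1$ of $w$ do \emph{not} form a row-matching pair, (i) forbids $w$ from being an internal row of a single accretion; so in any width-$4$ $DW_k$ picture the row $w$ must straddle a Simplot boundary. Even width forces the only possible vertical cut to fall between columns $2$ and $3$, whence $w$ splits as $a_1b_1\mid c_1d_1$ with $a_1b_1$ a row of a width-$2$ $DW_k$ picture and $c_1d_1$ a row of another, at the same row index; by (ii) the former requires an odd index and the latter an even one, a contradiction. Organizing this as an induction on the height (the remaining case, a purely horizontal Simplot cut, confines $w$ to a strictly shorter width-$4$ sub-picture) completes the proof that $w\notin\text{ROW}(DW_k)$.

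Finally, the column statements follow by symmetry: transposing a picture and swapping $b_i\leftrightarrow c_i$ (that is, $\urcorner\leftrightarrow\llcorner$) is an involution $\tau$ that maps each of $DN_k$, $DC_k$, $DW_k$ to itself, exchanges $D^{Row}_k$ with $D^{Col}_k$ and $\text{ROW}$ with $\text{COL}$; applying the row results through $\tau$ yields items~1 and~2 for columns, the column witness being $\tau(w)=a_1c_1b_1d_1$. I expect the main obstacle to be item~2: one must make the accretion/Simplot facts (i) and (ii) precise and show that the recursive decomposition of a $DW_k$ picture genuinely leaves no room for $w$, organizing the induction on the decomposition; by contrast, once the strips are fixed the neutralizations needed for item~1 are routine.
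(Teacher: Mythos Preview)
Your Part~(1) is correct and is essentially the paper's own argument: the same height-$4$ strips (the paper places $w$ in the third row rather than the second, which merely swaps the two column patterns), the same induction along concatenation and the prime cases $a_iXb_i$, $c_iXd_i$, and the same two neutralizations to clear each frame. The transpose symmetry via $b_i\leftrightarrow c_i$ is a clean way to get the column statements.

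For Part~(2) your witness $a_1b_1c_1d_1$ is the right one, but the parity step has a gap. When the row $w$ is split $2{+}2$ by a Simplot tessellation, the two width-$2$ pieces $q_1,q_2$ need not start at the same absolute row of $p$, so the \emph{local} parity of the row carrying $a_1b_1$ in $q_1$ and of the row carrying $c_1d_1$ in $q_2$ cannot be compared directly; as written, ``at the same row index'' does not yield a contradiction. The missing observation is that every $DW_k$ picture has even height; hence in any tessellation by $DW_k$ pieces, each column of $p$ is partitioned into even-length vertical segments, so every piece starts at an \emph{odd} absolute row. With this, the local index within $q_1$ and within $q_2$ have the same parity as the absolute row $r$, and your contradiction goes through. (Also phrase the induction so as to exclude $w$ from being \emph{any} row, not just an internal one, since the recursion into a full-width sub-piece may land $w$ on its boundary.)

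It is worth noting that the paper proves Part~(2) by a different shortcut, asserting that every width-$4$ picture of $DW_1$ lies in the vertical closure of two specific blocks. That intermediate claim is in fact too strong: the nesting accretion of $\left(\begin{smallmatrix}a&b\\c&d\\a&b\\c&d\end{smallmatrix}\right)$ with $w_c=aacc$ is in $DW_1$, width $4$, and has the row $acdb$, which appears in neither block. Your structural-induction route, once the even-height/odd-start observation is added, is a sound way to obtain the result.
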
                                          
	\begin{proof}                                          
	Part (1): It is enough to prove that $D^{Row}_k\subseteq \text{ROW}(DN_k)$, since the other inclusion is obvious and the case for columns is symmetrical; moreover, $DN_k \subseteq DC_k$, so there is no need to prove the statement for $DC_k$. Without loss of generality, we consider only the case $k=1$. 
		We prove by induction on $n\ge 2 $, that for every word $w \in D^{Row}_1$ of length $n$ there exists a picture $p \in DN_1$ of the form $w_1 \ominus w_2 \ominus w \ominus w_3$ for $w_1,w_2,w_3 \in D^{Row}_1$. There are two base cases, the words $ab$ and $cd$. 
		The word $ab$ is (also) the third row in the $DN_1$ picture                  
		$ab\ominus cd\ominus ab \ominus cd$,                              
		while $cd$ is (also) the third row in the $DN_1$ picture                    
		$ab\ominus ab\ominus cd \ominus cd$.                              
		The induction step has three cases: a word $w\in D^{Row}_1$ of length $n>2$ has the form $w'w''$, or the form $a w' b$ or the form $c w' d$, for some $w',w'' \in D^{Row}_1$ of length less than $n$. Let $p',p''$ the pictures verifying the induction hypothesis for $w'$ and $w''$ respectively.
		The case of concatenation $w'w''$ is obvious (just consider the picture $p'\obar p''$). The case $a w' b$ can be solved by considering the picture $(a \ominus c \ominus a \ominus c)\obar p' \obar (b \ominus d \ominus b \ominus d)$, which is in $DN_1$. Similarly, for the case $c w' d$ just consider the $DN_1$ picture $(a \ominus a \ominus c \ominus c)\obar p' \obar (b \ominus b \ominus d \ominus d)$.
\par\noindent                                          
	Part (2): The Dyck word $abcd$ cannot be a row of a picture in $DW_k$. In fact, every picture in $DW_1$ of width 4 must be in the vertical concatenation closure of the set composed of the following two pictures, which do not include an $abcd$ row :
\scalebox{0.8}{$\begin{array}{c c c c}                              
 a &  a &  b&  b \\                                     
 a &  a &  b & b \\                                      
 c & c &  d &  d \\                                     
 c &  c &  d &  d \\                                    
\end{array}, \qquad                                       
\begin{array}{cc} a & b \\ c& d \end{array}\begin{array}{cc} a & b \\ c& d\end{array}.   
$                                                
}                                                
\end{proof}

\subsection{Matching-graph circuits}                               
                                                 
\begin{figure}
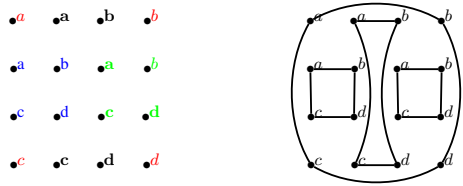
                                          
	\setlength{\tabcolsep}{30pt}                                  
	\setlength{\arraycolsep}{7pt}\renewcommand{\arraystretch}{1.2}                 
	\begin{center}                                         
		                                                
		\psset{arrows=-,labelsep=5pt,colsep=15pt,rowsep=15pt,nodealign=true}              
		\scalebox{0.7}{\begin{psmatrix}                                      
				\dotnode{p11}{$\red a$}&\dotnode{p12}{$\mathbf a$}&\dotnode{p13}{$\mathbf b$}&\dotnode{p14}{$\red b$}      
				\\                                              
				\dotnode{p21}{$\blue\text a$}&\dotnode{p22}{$\blue\text b$}&\dotnode{p23}{$\green\mathbf{a} $}&\dotnode{p24}{$\green b$}      
				\\                                              
				\dotnode{p31}{$\blue \text c$}&\dotnode{p32}{$\blue \text d$}&\dotnode{p33}{$\green \mathbf{c}$}&\dotnode{p34}{$\green \mathbf{d}$}      
				\\                                           
				\dotnode{p41}{$\red c$}&\dotnode{p42}{$\mathbf c$}&\dotnode{p43}{$\mathbf d$}&\dotnode{p44}{$\red d$}    
			\end{psmatrix}
		}\qquad\qquad\qquad
		\psset{arrows=-,labelsep=5pt,colsep=15pt,rowsep=15pt,nodealign=true}         
		\scalebox{0.7}{\begin{psmatrix}                                      
				\dotnode{p11}{$a$}&\dotnode{p12}{$a$}&\dotnode{p13}{$b$}&\dotnode{p14}{$b$}      
				\\                                           
				\dotnode{p21}{$a$}&\dotnode{p22}{$b$}&\dotnode{p23}{$a$}&\dotnode{p24}{$b$}      
				\\                                           
				\dotnode{p31}{$c$}&\dotnode{p32}{$d$}&\dotnode{p33}{$c$}&\dotnode{p34}{$d$}      
				\\                                           
				\dotnode{p41}{$c$}&\dotnode{p42}{$c$}&\dotnode{p43}{$d$}&\dotnode{p44}{$d$}                                       
				
				\ncarc[arcangle=30,linestyle=solid,linewidth=1.0pt]{p11}{p14} 
				\ncarc[arcangle=0,linestyle=solid,linewidth=1.0pt]{p12}{p13}             
				\ncarc[arcangle=-30, linestyle=solid,linewidth=1.0pt]{p11}{p41}
				\ncarc[arcangle=30,linestyle=solid,linewidth=1.0pt]{p14}{p44}
				\ncarc[arcangle=25,linestyle=solid,linewidth=1.0pt]{p12}{p42}
				\ncarc[arcangle=-25,linestyle=solid,linewidth=1.0pt]{p13}{p43}
				
				\ncarc[arcangle=0,linestyle=solid,linewidth=1.0pt]{p21}{p22}
				\ncarc[arcangle=0,linestyle=solid,linewidth=1.0pt]{p23}{p24}
				\ncarc[arcangle=0,linestyle=solid,linewidth=1.0pt]{p21}{p31}
				\ncarc[arcangle=0,linestyle=solid,linewidth=1.0pt]{p22}{p32}
				\ncarc[arcangle=0,linestyle=solid,linewidth=1.0pt]{p23}{p33} 
				\ncarc[arcangle=0,linestyle=solid,linewidth=1.0pt]{p24}{p34} 
				
				\ncarc[arcangle=0,linestyle=solid,linewidth=1.0pt]{p31}{p32}
				\ncarc[arcangle=0,linestyle=solid,linewidth=1.0pt]{p33}{p34}
				
				\ncarc[arcangle=0,linestyle=solid,linewidth=1.0pt]{p42}{p43}           
				\ncarc[arcangle=-30,linestyle=solid,linewidth=1.0pt]{p41}{p44} 
			\end{psmatrix}
		}
	\end{center}
	\caption{(Left) A $DC_1$ picture whose cells are partitioned into 4 quadruples of matching symbols, identified by the same color (font). (Right) An alternative visualization (right) by a graph using edges that connect matching symbols. }
	\label{fig-rectangles}
\end{figure}
\par
We present some patterns that occur in $DC_k$ pictures. The simplest patterns are found in pictures that are partitioned into rectangular circuits connecting four elements, see, e.g., Figure~\ref{fig-rectangles}, right, where an edge connects two symbols on the same row (or column) which match in the row (column) Dyck word.
Notice that the graph made by the edges contains four disjoint circuits of length four, called \emph{rectangles} for brevity. Three of the  circuits are nested inside the outermost one.
\par
However, a picture in $DC_1$ may also include circuits longer than four. In Figure~\ref{figPictureFarfallaInDC} (left) we see a circuit of length 12, labeled by the word $(abdc)^3$, and on the right a circuit of length 36. Notice that when a picture on $\Delta_1$ is represented by circuits, the node labels are redundant since they are uniquely determined on each circuit.

\begin{figure}[hbt]
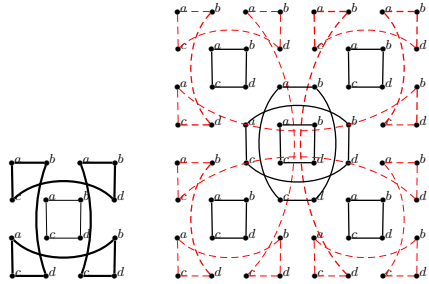

	\begin{center}
		\scalebox{0.55}{%
			\psset{arrows=-,labelsep=3pt,colsep=15pt,rowsep=15pt,nodealign=true}         
			\begin{psmatrix}                                                                         
				\dotnode{p11}{$a$}&\dotnode{p12}{$b$}&\dotnode{p13}{$a$}&\dotnode{p14}{$b$}      
				\\                                           
				\dotnode{p21}{$c$}&\dotnode{p22}{$a$}&\dotnode{p23}{$b$}&\dotnode{p24}{$d$}       
				\\                                           
				\dotnode{p31}{$a$}&\dotnode{p32}{$c$}&\dotnode{p33}{$d$}&\dotnode{p34}{$b$}       
				\\                                           
				\dotnode{p41}{$c$}&\dotnode{p42}{$d$}&\dotnode{p43}{$c$}&\dotnode{p44}{$d$}                                          
				\ncarc[arcangle=0, linestyle=solid,linewidth=1.5pt]{p11}{p12}              
				\ncarc[arcangle=-0, linestyle=solid,linewidth=1.5pt]{p11}{p21}            
				\ncarc[arcangle=-20, linestyle=solid,linewidth=1.5pt]{p12}{p42}           
				\ncarc[arcangle=-0, linestyle=solid,linewidth=1.5pt]{p41}{p42}                                           
				\ncarc[arcangle=45, linestyle=solid,linewidth=1.5pt]{p21}{p24}             
				\ncarc[arcangle=-45, linestyle=solid,linewidth=1.5pt]{p31}{p34}             
				\ncarc[arcangle=-0, linestyle=solid,linewidth=1.5pt]{p31}{p41}             
				\ncarc[arcangle=-0, linestyle=solid,linewidth=1.5pt]{p34}{p44}                                            
				\ncarc[arcangle=0, linestyle=solid,linewidth=1.5pt]{p13}{p14}            
				\ncarc[arcangle=-0, linestyle=solid,linewidth=1.5pt]{p43}{p44}            
				\ncarc[arcangle=20, linestyle=solid,linewidth=1.5pt]{p13}{p43}           
				\ncarc[arcangle=0, linestyle=solid,linewidth=1.5pt]{p14}{p24}                                           
				\ncarc[arcangle=0, linestyle=solid,linewidth=0.5pt]{p22}{p23}             
				\ncarc[arcangle=0, linestyle=solid,linewidth=0.5pt]{p32}{p33}             
				\ncarc[arcangle=0, linestyle=solid,linewidth=0.5pt]{p22}{p32}              
				\ncarc[arcangle=-0, linestyle=solid,linewidth=0.5pt]{p23}{p33}             
			\end{psmatrix}	
		}%
		\qquad
		\scalebox{0.55}{%
\psset{arrows=-,labelsep=3pt,colsep=15pt,rowsep=15pt,nodealign=true}         
\begin{psmatrix}                                 
\\    
\dotnode{p11}{$a$}&\dotnode{p12}{$b$}&\dotnode{p13}{$a$}&\dotnode{p14}{$b$}
& \dotnode{p15}{$a$}&\dotnode{p16}{$b$}&\dotnode{p17}{$a$}&\dotnode{p18}{$b$}    
\\   
\dotnode{p21}{$c$}&\dotnode{p22}{$a$}&\dotnode{p23}{$b$}&\dotnode{p24}{$d$}
& \dotnode{p25}{$c$}&\dotnode{p26}{$a$}&\dotnode{p27}{$b$}&\dotnode{p28}{$d$}            
\\   
\dotnode{p31}{$a$}&\dotnode{p32}{$c$}&\dotnode{p33}{$d$}&\dotnode{p34}{$a$}
&  \dotnode{p35}{$b$}&\dotnode{p36}{$c$}&\dotnode{p37}{$d$}&\dotnode{p38}{$b$}          
\\                                             
\dotnode{p41}{$c$}&\dotnode{p42}{$d$}&\dotnode{p43}{$a$}&\dotnode{p44}{$a$} 
&    \dotnode{p45}{$b$}&\dotnode{p46}{$b$}&\dotnode{p47}{$c$}&\dotnode{p48}{$d$}
\\                                             
\dotnode{p51}{$a$}&\dotnode{p52}{$b$}&\dotnode{p53}{$c$}&\dotnode{p54}{$c$} 
&    \dotnode{p55}{$d$}&\dotnode{p56}{$d$}&\dotnode{p57}{$a$}&\dotnode{p58}{$b$}
\\ 
\dotnode{p61}{$c$}&\dotnode{p62}{$a$}&\dotnode{p63}{$b$}&\dotnode{p64}{$c$} 
&    \dotnode{p65}{$d$}&\dotnode{p66}{$a$}&\dotnode{p67}{$b$}&\dotnode{p68}{$d$}
\\ 
\dotnode{p71}{$a$}&\dotnode{p72}{$c$}&\dotnode{p73}{$d$}&\dotnode{p74}{$b$} 
&    \dotnode{p75}{$a$}&\dotnode{p76}{$c$}&\dotnode{p77}{$d$}&\dotnode{p78}{$b$}
\\
\dotnode{p81}{$c$}&\dotnode{p82}{$d$}&\dotnode{p83}{$c$}&\dotnode{p84}{$d$} 
&    \dotnode{p85}{$c$}&\dotnode{p86}{$d$}&\dotnode{p87}{$c$}&\dotnode{p88}{$d$}

\ncarc[arcangle=0, linestyle=dashed,linecolor=red, linewidth=0.5pt]{p11}{p12}                 
\ncarc[arcangle=0, linestyle=dashed,linecolor=red,linewidth=0.5pt]{p11}{p21}                   
\ncarc[arcangle=-45, linestyle=dashed,linecolor=red,linewidth=0.9pt]{p12}{p42}

\ncarc[arcangle=0, linestyle=dashed,linecolor=red,linewidth=0.5pt]{p13}{p14}
\ncarc[arcangle=45, linestyle=dashed,linecolor=red,linewidth=0.5pt]{p13}{p83} 
                 
\ncarc[arcangle=-0, linestyle=dashed,linecolor=red,linewidth=0.5pt]{p14}{p24}
                   
\ncarc[arcangle=0, linestyle=dashed,linecolor=red,linewidth=0.9pt]{p15}{p16}
\ncarc[arcangle=0, linestyle=dashed,linecolor=red,linewidth=0.9pt]{p15}{p25}
\ncarc[arcangle=-45, linestyle=dashed,linecolor=red,linewidth=0.9pt]{p16}{p86}
																																																			
\ncarc[arcangle=0, linestyle=dashed,linecolor=red,linewidth=0.9pt]{p17}{p18}
\ncarc[arcangle=45, linestyle=dashed,linecolor=red,linewidth=0.9pt]{p17}{p47}
\ncarc[arcangle=0, linestyle=dashed,linecolor=red,linewidth=0.9pt]{p18}{p28} 

\ncarc[arcangle=45, linestyle=dashed,linecolor=red,linewidth=0.5pt]{p21}{p24}
\ncarc[arcangle=0, linestyle=solid,linewidth=0.9pt]{p22}{p23}
\ncarc[arcangle=0, linestyle=solid,linewidth=0.9pt]{p22}{p32}
\ncarc[arcangle=0, linestyle=solid,linewidth=0.9pt]{p23}{p33}
\ncarc[arcangle=45, linestyle=dashed,linecolor=red,linewidth=0.5pt]{p25}{p28}
\ncarc[arcangle=0, linestyle=solid,linewidth=0.9pt]{p26}{p27}
\ncarc[arcangle=0, linestyle=solid,linewidth=0.9pt]{p26}{p36}
\ncarc[arcangle=0, linestyle=solid,linewidth=0.9pt]{p27}{p37}

\ncarc[arcangle=-45, linestyle=dashed,linecolor=red,linewidth=0.5pt]{p31}{p38}
\ncarc[arcangle=0, linestyle=dashed,linecolor=red,linewidth=0.5pt]{p31}{p41}
\ncarc[arcangle=0, linestyle=solid,linewidth=0.9pt]{p32}{p33}
\ncarc[arcangle=0, linestyle=solid,linewidth=0.9pt]{p34}{p35}
\ncarc[arcangle=-45, linestyle=solid,linewidth=0.9pt]{p34}{p64}
\ncarc[arcangle=45, linestyle=solid,linewidth=0.9pt]{p35}{p65}
\ncarc[arcangle=0, linestyle=solid,linewidth=0.9pt]{p36}{p37}
\ncarc[arcangle=0, linestyle=dashed,linecolor=red,linewidth=0.9pt]{p38}{p48}

\ncarc[arcangle=0, linestyle=dashed,linecolor=red,linewidth=0.9pt]{p41}{p42}
\ncarc[arcangle=45, linestyle=solid,linewidth=0.9pt]{p43}{p46}
\ncarc[arcangle=0, linestyle=solid,linewidth=0.9pt]{p43}{p53}
\ncarc[arcangle=0, linestyle=solid,linewidth=0.9pt]{p44}{p45}
\ncarc[arcangle=0, linestyle=solid,linewidth=0.9pt]{p44}{p54}
\ncarc[arcangle=0, linestyle=solid,linewidth=0.9pt]{p45}{p55}
\ncarc[arcangle=0, linestyle=solid,linewidth=0.9pt]{p46}{p56}
\ncarc[arcangle=0, linestyle=dashed,linecolor=red,linewidth=0.5pt]{p47}{p48}

\ncarc[arcangle=0, linestyle=dashed,linecolor=red,linewidth=0.5pt]{p51}{p52}
\ncarc[arcangle=0, linestyle=dashed,linecolor=red,linewidth=0.5pt]{p51}{p61}
\ncarc[arcangle=-45, linestyle=dashed,linecolor=red,linewidth=0.9pt]{p52}{p82}
\ncarc[arcangle=-45, linestyle=solid,linewidth=0.9pt]{p53}{p56}
\ncarc[arcangle=0, linestyle=solid,linewidth=0.9pt]{p54}{p55}
\ncarc[arcangle=0, linestyle=dashed,linecolor=red,linewidth=0.5pt]{p57}{p58}
\ncarc[arcangle=45, linestyle=dashed,linecolor=red,linewidth=0.5pt]{p57}{p87}
\ncarc[arcangle=0, linestyle=dashed,linecolor=red,linewidth=0.5pt]{p58}{p68}

\ncarc[arcangle=45, linestyle=dashed,linecolor=red,linewidth=0.5pt]{p61}{p68}
\ncarc[arcangle=0, linestyle=solid,linewidth=0.9pt]{p62}{p63}
\ncarc[arcangle=0, linestyle=solid,linewidth=0.9pt]{p62}{p72}
\ncarc[arcangle=0, linestyle=solid,linewidth=0.9pt]{p63}{p73}
\ncarc[arcangle=0, linestyle=solid,linewidth=0.9pt]{p64}{p65}
\ncarc[arcangle=0, linestyle=solid,linewidth=0.9pt]{p66}{p67}
\ncarc[arcangle=0, linestyle=solid,linewidth=0.9pt]{p66}{p76}
\ncarc[arcangle=0, linestyle=solid,linewidth=0.9pt]{p67}{p77}

\ncarc[arcangle=-45, linestyle=dashed,linecolor=red,linewidth=0.5pt]{p71}{p74}
\ncarc[arcangle=0, linestyle=dashed,linecolor=red,linewidth=0.5pt]{p71}{p81}
\ncarc[arcangle=0, linestyle=solid,linewidth=0.9pt]{p72}{p73}
\ncarc[arcangle=0, linestyle=dashed,linecolor=red,linewidth=0.5pt]{p74}{p84}
\ncarc[arcangle=0, linestyle=dashed,linecolor=red,linewidth=0.5pt]{p75}{p85}
\ncarc[arcangle=-45, linestyle=dashed,linecolor=red,linewidth=0.5pt]{p75}{p78}
\ncarc[arcangle=0, linestyle=solid,linewidth=0.9pt]{p76}{p77}
\ncarc[arcangle=0, linestyle=dashed,linecolor=red,linewidth=0.5pt]{p78}{p88}

\ncarc[arcangle=0, linestyle=dashed,linecolor=red,linewidth=0.5pt]{p81}{p82}
\ncarc[arcangle=0, linestyle=dashed,linecolor=red,linewidth=0.5pt]{p83}{p84}
\ncarc[arcangle=0, linestyle=dashed,linecolor=red,linewidth=0.5pt]{p85}{p86}
\ncarc[arcangle=0, linestyle=dashed,linecolor=red,linewidth=0.5pt]{p87}{p88}
                
\end{psmatrix}
}%
	\end{center}
	\caption{Two pictures in $DC_1$.  (Left) The picture is partitioned into two circuits of length 12 and 4. (Right) The picture includes a circuit of length 36 and seven rectangular circuits. Its pattern embeds four partial copies (direct or rotated) of the left picture; in, say, the NW copy the ``triangle'' $b d c$ has been changed to $a a a$. Such a transformation can be reiterated to grow a series of pictures.}\label{figPictureFarfallaInDC}
\end{figure}

\par
We formally define the graph, situated on the picture grid, made by such circuits. 
	
\begin{definition}[matching graph]~\label{defPictureGraph}
	The \emph{matching graph} {associated} with a picture $p \in DC_k$, of size $(m,n)$, 
 is a pair $(V,E)$ where 
		the set $V$ of nodes is the set $\{1,\dots n \} \times \{1 \dots m\}$ and the set $E$ of edges is partitioned in two sets of {\emph row} and {\emph column} edges defined as follows, for all $1 \le i \le n, 1\le j \le m$: 
	\begin{itemize}
		\item for all pairs of matching letters $p_{i,j}, p_{i,j'}$ in $\Delta^{Row}_k$, with $j< j'\le m$, there is a row (horizontal) edge connecting $(i,j)$ with $(i,j')$, 
		\item for all pairs of matching letters $p_{i,j}, p_{i',j}$ in $\Delta^{Col}_k$, with $i< i'\le n$, there is a column (vertical) edge connecting $(i,j)$ with $(i',j)$,
		
	\end{itemize}
\end{definition}

Therefore, there is a horizontal edge connecting two matching letters $a_i,b_i$ or $c_i,d_i$ that occur in the same row: e.g., the edge $(2,1)\leftrightarrow (2,4)$ of Figure~\ref{figPictureFarfallaInDC}, left. Analogously, there is a vertical edge connecting two matching letters $a_i,c_i$ or $b_i,d_i$, that occur in the same column: e.g., the edge $(2,2)\leftrightarrow (3,2)$ of Figure~\ref{figPictureFarfallaInDC}, left.

{ From elementary properties of Dyck languages it follows that the distance on the picture grid 
between two nodes connected by an edge is an odd number.
}

	\begin{theorem}[matching-graph circuits]~\label{theorGraphCircuits}
		Let $p$ be a picture in $DC_k$. Then:
		\begin{enumerate}
			\item its matching graph $G$ is partitioned into disjoint simple circuits; 
			\item the clockwise visit of any such circuit, starting from one of its nodes with label $a_j$, yields a word in the language $(a_jb_jd_jc_j)^+$, for all $1\le j\le k$.
		\end{enumerate}
	\end{theorem}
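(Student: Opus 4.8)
The plan is to first establish part (1) by a degree count, and then derive part (2) from the local structure of the four corner labels together with the left-to-right and top-to-bottom orientation imposed by the Dyck matchings.

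For part (1), I would observe that each cell is an endpoint of exactly one row edge and exactly one column edge. Indeed, the row through a given cell is a word of $D^{Row}_k$, so by the matching property of Dyck words every position of that row is matched with exactly one other position; this produces a single horizontal edge at the cell. Symmetrically, the column is a word of $D^{Col}_k$ and yields a single vertical edge. These two edges are distinct, since one joins cells sharing the row while the other joins cells sharing the column. Hence every node of the matching graph $G$ has degree exactly $2$, and a finite graph that is $2$-regular is the disjoint union of its connected components, each of which is a simple circuit. This gives the partition claimed in (1).

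For part (2), I would first note that the common index is invariant along a circuit: a row edge joins $a_j$ with $b_j$ or $c_j$ with $d_j$, and a column edge joins $a_j$ with $c_j$ or $b_j$ with $d_j$, so both endpoints of every edge carry the same subscript $j$, and connectivity forces $j$ to be constant on the whole circuit. Fixing $j$ and dropping the subscript, I record the local adjacencies: the row-neighbours are $a\!-\!b$ and $c\!-\!d$, and the column-neighbours are $a\!-\!c$ and $b\!-\!d$. Because the two edges incident to any node are of different kinds (one row, one column), a traversal of the circuit must alternate row and column edges: after entering a node along a row edge one leaves along its column edge, and vice versa. Propagating this rule from a node labelled $a$ gives the forced succession $a \xrightarrow{\text{row}} b \xrightarrow{\text{col}} d \xrightarrow{\text{row}} c \xrightarrow{\text{col}} a \xrightarrow{\text{row}} \cdots$, i.e. the label sequence $a,b,d,c$ repeated; the opposite sense of traversal produces the reversed pattern $a,c,d,b$. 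Since the circuit is finite and returns to the starting $a$-node only when the four-letter block is completed, its length is a multiple of $4$ and the word read is in $(a_j b_j d_j c_j)^+$ (respectively $(a_j c_j d_j b_j)^+$).

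It remains to identify which of the two senses is the clockwise one, and this I expect to be the delicate point. The direction of each edge is fixed by the Dyck pairings, because in a Dyck word the open symbol precedes its matching close symbol: thus $a$ lies to the left of its partner $b$ and above its partner $c$, while $b$ lies above $d$ and $c$ to the left of $d$. Consequently the step $a\to b$ moves east, $b\to d$ moves south, $d\to c$ moves west and $c\to a$ moves north, so in the $a,b,d,c$ sense the heading cycles through East, South, West, North and the path turns right ($90^\circ$ clockwise) at every node, whereas the reverse sense turns left everywhere. I would therefore take this ``all right turns'' property as the meaning of a clockwise visit; it is well posed even though a circuit may cross itself on the grid, so that a global winding orientation is not available, precisely because the two edges at each node are perpendicular and hence determine an unambiguous left/right turn. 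This pins the clockwise traversal to the $a,b,d,c$ pattern and completes (2). The main obstacle is thus not the combinatorics of the label pattern, which is forced, but giving ``clockwise'' a robust meaning for self-intersecting circuits; the local turning characterization is the way I would resolve it.
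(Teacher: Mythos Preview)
Your proof is correct and follows essentially the same approach as the paper: both argue that every node has degree~$2$ (one row edge and one column edge), whence the graph decomposes into disjoint simple circuits, and both obtain the label pattern by tracing the forced alternation $a_j\to b_j\to d_j\to c_j\to a_j$. Your discussion of what ``clockwise'' should mean (via local right-turns) is more careful than the paper's, which simply asserts the direction without further justification.
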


\begin{proof}
 Part (1): By Definition~\ref{defPictureGraph}, every node of $G$ has degree 2, with one row edge and one column edge, since its corresponding row and column in picture $p$ are Dyck words. Every node must be on a circuit, otherwise there would be a node of degree 1. Each circuit must be simple and the sets of nodes on two circuits are disjoint, else one of the nodes would have degree greater than 2.
Part (2) is obvious, since from a node labeled $a_j$ there is a row edge connecting with a node labeled $b_j$, for which there is a column edge connecting with a $d_j$, then a row edge 
connecting $d_j$ with $c_j$, etc., finally closing the circuit with a column edge connecting a $c_j$ with the original $a_j$. 
\end{proof}

Theorem~\ref{theorGraphCircuits} has a simple interpretation: to check that in a picture all rows and columns are Dyck words of respectively $D^{Row}_k$ and $D^{Col}_k$, we could proceed along each linear path. The process moves from an opening letter  (say $a$) to its matching letter ($b$) on the same row, while verifying that the word between the two letters is correctly parenthesized; then, the process moves to the closed matching letter ($d$) on the column of $b$, and so on, until the circuit is closed, or interrupted causing rejection.
Such a checking of $DC_k$ membership corresponds to a way of checking Dyck membership for words. Since a word is a picture of size $(1,n)$, its associated matching graph is the well-known so-called rainbow representation, e.g., \quad
\psset{arrows=-,labelsep=15pt,colsep=10pt,rowsep=15pt,nodealign=true}         
		\scalebox{0.9}{
		\begin{psmatrix}                                   
\pnode{p11}{$a$}&\pnode{p12}{$c$}&\pnode{p13}{$d$}&\pnode{p14}{$a$}&\pnode{p15}{$b$}&\pnode{p16}{$b$}                          
		\ncarc[nodesep=8pt,arcangle=40,linestyle=solid,linewidth=0.6pt]{p11}{p16} 
		\ncarc[nodesep=6pt,arcangle=80,linestyle=solid,linewidth=0.6pt]{p12}{p13}             
		\ncarc[nodesep=6pt,arcangle=80, linestyle=solid,linewidth=0.6pt]{p14}{p15}
		\end{psmatrix}
		\quad
}
of the syntax tree of the word. A matching circuit then corresponds to the binary relation between the two ends of a rainbow arc. However it is perhaps unexpected that moving from 1D to 2D the length of circular paths increases not just to $2\times 2$, but without an upper bound, as proved below.

Notice that there exist pictures that are not in $DC_1$, but which still can be partitioned in circuits with label in $abdc^+$ and having arcs following the correct directions (starting from a node label $a$, going  right, then down, then left and then up).  
For instance, in the picture: 
\[
\begin{array}{|cccccccc|}\hline
a&a&a&a&b&b&b&b\\
c&c&a&a&d&d&b&b\\
a&a&c&c&b&b&d&d\\
c&c&c&c&d&d&d&d\\	\hline
\end{array}
\]

\noindent all 8 columns and the first and fourth rows are Dyck words, while the second and third rows are not Dyck words. Still, it is easy to verify that the picture can be partitioned in ''correct`` circuits having label in $(abcd)^+$ (two circuits of length 12 and two circuits of length 4). 

\begin{figure}[htb]
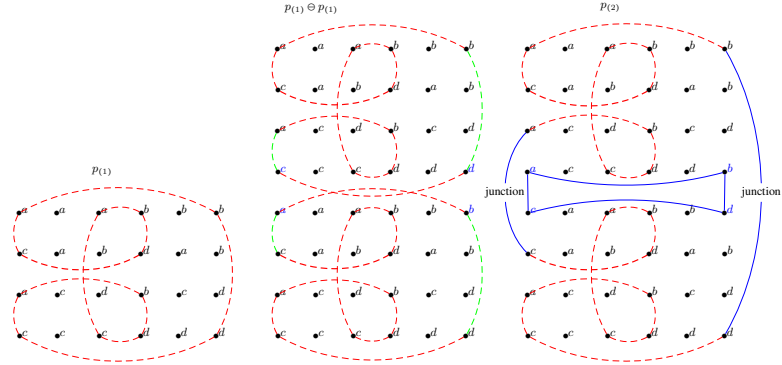

	\begin{center}
		\scalebox{0.5}{%
			\psset{arrows=-,labelsep=3pt,colsep=20pt,rowsep=20pt,nodealign=true} 
			\begin{psmatrix} 
				&&$p_{(1)}$ 
				\\    
				\dotnode{p11}{$a$}&\dotnode{p12}{$a $}&\dotnode{p13}{$a$}&\dotnode{p14}{$b $}&\dotnode{p15}{$b$}&\dotnode{p16}{$b $}
				\\   
				\dotnode{p21}{$c $}&\dotnode{p22}{$a $}&\dotnode{p23}{$b $}&\dotnode{p24}{$d$}&\dotnode{p25}{$a$}&\dotnode{p26}{$b $}
				\\   
				\dotnode{p31}{$a $}&\dotnode{p32}{$c $}&\dotnode{p33}{$d $}&\dotnode{p34}{$b $}&\dotnode{p35}{$c$}&\dotnode{p36}{$d $}
				\\
				\dotnode{p41}{$c$}&\dotnode{p42}{$c$}&\dotnode{p43}{$c$}&\dotnode{p44}{$d$}&\dotnode{p45}{$d$}&\dotnode{p46}{$d$} 
				\ncarc[arcangle=30, linestyle=dashed,linecolor=red,linewidth=0.5pt]{p11}{p16}                  
				\ncarc[arcangle=-30, linestyle=dashed,linecolor=red,linewidth=0.5pt]{p11}{p21}                  
				\ncarc[arcangle=30, linestyle=dashed,linecolor=red,linewidth=0.5pt]{p13}{p14}
				\ncarc[arcangle=-30, linestyle=dashed,linecolor=red,linewidth=0.5pt]{p13}{p43}
				\ncarc[arcangle=30, linestyle=dashed,linecolor=red,linewidth=0.5pt]{p14}{p24}
				\ncarc[arcangle=30, linestyle=dashed,linecolor=red,linewidth=0.5pt]{p16}{p46}
				
				\ncarc[arcangle=-30, linestyle=dashed,linecolor=red,linewidth=0.9pt]{p21}{p24}
				
				\ncarc[arcangle=30, linestyle=dashed,linecolor=red,linewidth=0.5pt]{p31}{p34}
				\ncarc[arcangle=-30, linestyle=dashed,linecolor=red,linewidth=0.5pt]{p31}{p41}
				\ncarc[arcangle=30, linestyle=dashed,linecolor=red,linewidth=0.5pt]{p34}{p44} 
				
				\ncarc[arcangle=-30, linestyle=dashed,linecolor=red,linewidth=0.5pt]{p41}{p46}
				\ncarc[arcangle=-30, linestyle=dashed,linecolor=red,linewidth=0.5pt]{p43}{p44}                   
			\end{psmatrix}
		}%
		\qquad
		\scalebox{0.5}{%
			\psset{arrows=-,labelsep=3pt,colsep=20pt,rowsep=20pt,nodealign=true} 
			\begin{psmatrix}  
				\pnode{p01}                                
				\\    
				\dotnode{p11}{$a$}&\dotnode{p12}{$a $}&\dotnode{p13}{$a$}&\dotnode{p14}{$b $}&\dotnode{p15}{$b$}&\dotnode{p16}{$b $}
				\\   
				\dotnode{p21}{$c $}&\dotnode{p22}{$a $}&\dotnode{p23}{$b $}&\dotnode{p24}{$d$}&\dotnode{p25}{$a$}&\dotnode{p26}{$b $}
				\\   
				\dotnode{p31}{$a $}&\dotnode{p32}{$c $}&\dotnode{p33}{$d $}&\dotnode{p34}{$b $}&\dotnode{p35}{$c$}&\dotnode{p36}{$d $}
				\\
				\dotnode{p41}{$\blue c$}&\dotnode{p42}{$c$}&\dotnode{p43}{$c$}&\dotnode{p44}{$d$}&\dotnode{p45}{$d$}&\dotnode{p46}{$\blue d$} 
				
				\nput{0}{p01}{$p_{(1)}\ominus p_{(1)}$ }
				
				\ncarc[arcangle=30, linestyle=dashed,linecolor=red,linewidth=0.5pt]{p11}{p16}
				\ncarc[arcangle=-30, linestyle=dashed,linecolor=red,linewidth=0.5pt]{p11}{p21}                  
				\ncarc[arcangle=30, linestyle=dashed,linecolor=red,linewidth=0.5pt]{p13}{p14}
				\ncarc[arcangle=-30, linestyle=dashed,linecolor=red,linewidth=0.5pt]{p13}{p43}
				\ncarc[arcangle=30, linestyle=dashed,linecolor=red,linewidth=0.5pt]{p14}{p24}
				\ncarc[arcangle=30, linestyle=dashed,linecolor=green,linewidth=0.5pt]{p16}{p46}
				
				\ncarc[arcangle=-30, linestyle=dashed,linecolor=red,linewidth=0.9pt]{p21}{p24}
				
				\ncarc[arcangle=30, linestyle=dashed,linecolor=red,linewidth=0.5pt]{p31}{p34}
				\ncarc[arcangle=-30, linestyle=dashed,linecolor=green,linewidth=0.5pt]{p31}{p41}
				\ncarc[arcangle=30, linestyle=dashed,linecolor=red,linewidth=0.5pt]{p34}{p44} 
				
				\ncarc[arcangle=-30, linestyle=dashed,linecolor=red,linewidth=0.5pt]{p41}{p46}
				\ncarc[arcangle=-30, linestyle=dashed,linecolor=red,linewidth=0.5pt]{p43}{p44}    
				\\    
				\dotnode{p'11}{$\blue a$}&\dotnode{p'12}{$a $}&\dotnode{p'13}{$a$}&\dotnode{p'14}{$b $}&\dotnode{p'15}{$b$}&\dotnode{p'16}{$\blue b $}
				\\   
				\dotnode{p'21}{$c $}&\dotnode{p'22}{$a $}&\dotnode{p'23}{$b $}&\dotnode{p'24}{$d$}&\dotnode{p'25}{$a$}&\dotnode{p'26}{$b $}
				\\   
				\dotnode{p'31}{$a $}&\dotnode{p'32}{$c $}&\dotnode{p'33}{$d $}&\dotnode{p'34}{$b $}&\dotnode{p'35}{$c$}&\dotnode{p'36}{$d $}
				\\
				\dotnode{p'41}{$c$}&\dotnode{p'42}{$c$}&\dotnode{p'43}{$c$}&\dotnode{p'44}{$d$}&\dotnode{p'45}{$d$}&\dotnode{p'46}{$d$} 
				
				\ncarc[arcangle=30, linestyle=dashed,linecolor=red,linewidth=0.5pt]{p'11}{p'16}                  
				\ncarc[arcangle=-30, linestyle=dashed,linecolor=green,linewidth=0.5pt]{p'11}{p'21}
				
				\ncarc[arcangle=30, linestyle=dashed,linecolor=red,linewidth=0.5pt]{p'13}{p'14}
				\ncarc[arcangle=-30, linestyle=dashed,linecolor=red,linewidth=0.5pt]{p'13}{p'43}
				\ncarc[arcangle=30, linestyle=dashed,linecolor=red,linewidth=0.5pt]{p'14}{p'24}
				\ncarc[arcangle=30, linestyle=dashed,linecolor=green,linewidth=0.5pt]{p'16}{p'46}
				
				\ncarc[arcangle=-30, linestyle=dashed,linecolor=red,linewidth=0.9pt]{p'21}{p'24}
				
				\ncarc[arcangle=30, linestyle=dashed,linecolor=red,linewidth=0.5pt]{p'31}{p'34}
				\ncarc[arcangle=-30, linestyle=dashed,linecolor=red,linewidth=0.5pt]{p'31}{p'41}
				\ncarc[arcangle=30, linestyle=dashed,linecolor=red,linewidth=0.5pt]{p'34}{p'44} 
				
				\ncarc[arcangle=-30, linestyle=dashed,linecolor=red,linewidth=0.5pt]{p'41}{p'46}
				\ncarc[arcangle=-30, linestyle=dashed,linecolor=red,linewidth=0.5pt]{p'43}{p'44}
			\end{psmatrix}
		}%
		\qquad
		\scalebox{0.5}{%
			\psset{arrows=-,labelsep=3pt,colsep=20pt,rowsep=20pt,nodealign=true} 
			\begin{psmatrix} 
				&&$p_{(2)}$ 
				\\    
				\dotnode{p11}{$a$}&\dotnode{p12}{$a $}&\dotnode{p13}{$a$}&\dotnode{p14}{$b $}&\dotnode{p15}{$b$}&\dotnode{p16}{$b $}
				\\   
				\dotnode{p21}{$c $}&\dotnode{p22}{$a $}&\dotnode{p23}{$b $}&\dotnode{p24}{$d$}&\dotnode{p25}{$a$}&\dotnode{p26}{$b $}
				\\   
				\dotnode{p31}{$a $}&\dotnode{p32}{$c $}&\dotnode{p33}{$d $}&\dotnode{p34}{$b $}&\dotnode{p35}{$c$}&\dotnode{p36}{$d $}
				\\
				\dotnode{p41}{$\blue a$}&\dotnode{p42}{$c$}&\dotnode{p43}{$c$}&\dotnode{p44}{$d$}&\dotnode{p45}{$d$}&\dotnode{p46}{$\blue b$} 
				\\
				\dotnode{p51}{$\blue c$}&\dotnode{p52}{$a $}&\dotnode{p53}{$a$}&\dotnode{p54}{$b $}&\dotnode{p55}{$b$}&\dotnode{p56}{$\blue d $}
				\\   
				\dotnode{p61}{$c $}&\dotnode{p62}{$a $}&\dotnode{p63}{$b $}&\dotnode{p64}{$d$}&\dotnode{p65}{$a$}&\dotnode{p66}{$b $}
				\\   
				\dotnode{p71}{$a $}&\dotnode{p72}{$c $}&\dotnode{p73}{$d $}&\dotnode{p74}{$b $}&\dotnode{p75}{$c$}&\dotnode{p76}{$d $}
				\\
				\dotnode{p81}{$c$}&\dotnode{p82}{$c$}&\dotnode{p83}{$c$}&\dotnode{p84}{$d$}&\dotnode{p85}{$d$}&\dotnode{p86}{$d$}

				\ncarc[arcangle=30, linestyle=dashed,linecolor=red,linewidth=0.5pt]{p11}{p16}                  
				\ncarc[arcangle=-30, linestyle=dashed,linecolor=red,linewidth=0.5pt]{p11}{p21}                  
				\ncarc[arcangle=30, linestyle=dashed,linecolor=red,linewidth=0.5pt]{p13}{p14}
				\ncarc[arcangle=-30, linestyle=dashed,linecolor=red,linewidth=0.5pt]{p13}{p43}
				\ncarc[arcangle=30, linestyle=dashed,linecolor=red,linewidth=0.5pt]{p14}{p24}
				\ncarc[arcangle=30, linestyle=solid,linecolor=blue,linewidth=0.5pt]{p16}{p86}\ncput*{junction}
				
				\ncarc[arcangle=-30, linestyle=dashed,linecolor=red,linewidth=0.9pt]{p21}{p24}
				
				\ncarc[arcangle=30, linestyle=dashed,linecolor=red,linewidth=0.5pt]{p31}{p34}
				\ncarc[arcangle=-45, linestyle=solid,linecolor=blue,linewidth=0.5pt]{p31}{p61}\ncput*{junction}
				\ncarc[arcangle=30, linestyle=dashed,linecolor=red,linewidth=0.5pt]{p34}{p44} 
				\ncarc[arcangle=-15, linecolor=blue,linewidth=0.5pt]{p41}{p46}
				\ncarc[arcangle=15, linecolor=blue,linewidth=0.5pt]{p51}{p56}
				\ncarc[arcangle=0, linecolor=blue,linewidth=0.5pt]{p41}{p51}
				\ncarc[arcangle=0, linecolor=blue,linewidth=0.5pt]{p46}{p56}
				
				\ncarc[arcangle=-30, linestyle=dashed,linecolor=red,linewidth=0.5pt]{p43}{p44}  
				
				\ncarc[arcangle=30, linestyle=dashed,linecolor=red,linewidth=0.5pt]{p53}{p54}
				\ncarc[arcangle=-30, linestyle=dashed,linecolor=red,linewidth=0.5pt]{p53}{p83}
				\ncarc[arcangle=30, linestyle=dashed,linecolor=red,linewidth=0.5pt]{p54}{p64}
				
				\ncarc[arcangle=-30, linestyle=dashed,linecolor=red,linewidth=0.9pt]{p61}{p64}
				
				\ncarc[arcangle=30, linestyle=dashed,linecolor=red,linewidth=0.5pt]{p71}{p74}
				\ncarc[arcangle=-30, linestyle=dashed,linecolor=red,linewidth=0.5pt]{p71}{p81}
				\ncarc[arcangle=30, linestyle=dashed,linecolor=red,linewidth=0.5pt]{p74}{p84} 
				
				\ncarc[arcangle=-30, linestyle=dashed,linecolor=red,linewidth=0.5pt]{p81}{p86}
				\ncarc[arcangle=-30, linestyle=dashed,linecolor=red,linewidth=0.5pt]{p83}{p84}

			\end{psmatrix}
		}%
	\end{center}
	\caption{Left. Picture $p_{(1)}$ used as induction basis of Theorem~\ref{theorDCwithUnboundedCircuits}. It is covered by a circuit of length $4+8\cdot 1=12$ and by 3 rectangular circuits. 
		Middle. Picture $p_{(1)}\ominus p_{(1)}$, the four arcs to be deleted are in green, and the four nodes to be relabeled are in blue.
		Right. Inductive step: picture $p_{(2)}$ is obtained from $p_{(1)}\ominus p_{(1)}$ by canceling the four green arcs, relabeling the four blue nodes as shown (the corresponding rectangular circuit is in blue) and 
		finally adding two arcs (blue) that join the double-noose circuits. A circuit of length $4+ 8\cdot 2$ results. Notice that all length 4 circuits of $p_{(h-1)}$ and $p_{(1)}$ are unchanged in $p_{(h)}$.}~\label{figExtensibleLengthCircuitFamily}
\end{figure}

\begin{theorem}{\rm(Unbounded circuit length)}\label{theorDCwithUnboundedCircuits}
	For all $h \ge 0$ there exist a picture in $DC_k$ that contains a circuit of length $4+8h$.
\end{theorem}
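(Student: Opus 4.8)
The plan is to argue by induction on $h$, exhibiting an explicit family of pictures $p_{(h)} \in DC_1 \subseteq DC_k$ of size $(4h,6)$ whose matching graph consists of one long circuit of length $4+8h$ together with several length-$4$ rectangles; since any picture over $\{a,b,c,d\}=\{a_1,b_1,c_1,d_1\}$ is a fortiori a $DC_k$ picture, it suffices to work in $DC_1$. For $h=0$ the picture $\begin{array}{cc} a & b\\ c&d\end{array}$ is a rectangle, i.e. a circuit of length $4$. For $h=1$ I would take $p_{(1)}$ to be the $4\times 6$ picture of Figure~\ref{figExtensibleLengthCircuitFamily} (left) and check directly that every row and every column is a Dyck word of $D^{Row}_1$, respectively $D^{Col}_1$, and that tracing its matching graph as in Theorem~\ref{theorGraphCircuits} yields a single circuit through $12=4+8\cdot 1$ nodes, labeled $(abdc)^3$, plus three disjoint rectangles.

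For the inductive step I would maintain the invariant that $p_{(h)}$ has first row $a\,a\,a\,b\,b\,b$, last row $c\,c\,c\,d\,d\,d$, and a long circuit passing through the four corner cells. Given $p_{(h-1)}$, form the vertical concatenation $p_{(h-1)} \ominus p_{(1)}$; by closure of Dyck languages under concatenation (equivalently Theorem~\ref{theorClosureDCunderSimplot}) this is again in $DC_1$, and it carries the two disjoint long circuits of its factors. By the invariant, the seam consists of a row $c\,c\,c\,d\,d\,d$ directly above a row $a\,a\,a\,b\,b\,b$. I would then perform the local surgery of Figure~\ref{figExtensibleLengthCircuitFamily} (right): relabel the four seam-corner cells, changing the bottom-left/bottom-right of $p_{(h-1)}$ from $c,d$ to $a,b$ and the top-left/top-right of the fresh $p_{(1)}$ from $a,b$ to $c,d$, so that the two seam rows become $a\,c\,c\,d\,d\,b$ and $c\,a\,a\,b\,b\,d$.

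The verification then splits in two. For membership in $DC_1$, the only rows and columns affected are the two seam rows and columns $1$ and $6$; a short direct computation shows the two seam rows are words of $D^{Row}_1$, while in columns $1$ and $6$ the relabeling turns a seam pattern $\dots c\,|\,a\dots$ into $\dots a\,|\,c\dots$ (respectively $\dots d\,|\,b\dots$ into $\dots b\,|\,d\dots$), which keeps the balance nonnegative and closes at $0$, hence remains a word of $D^{Col}_1$; all other rows and columns are unchanged or are concatenations of Dyck words. For the circuit count, the relabeling detaches the four seam-corner cells from the two long circuits (where they were corner nodes) and reassembles them into one new length-$4$ rectangle, while opening each long circuit at its two relabeled corners; the two new junction matches created in columns $1$ and $6$ splice the two opened paths into a single circuit. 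Counting nodes, each long circuit loses its two relabeled corners, so the merged circuit visits $(4+8(h-1)-2)+(12-2)=4+8h$ cells, and the four corner corners produce one extra rectangle, as required.

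The step I expect to be delicate is precisely the connectivity claim: that the two junction edges splice the two opened circuits into exactly one circuit, and that the detached corners close into a single rectangle without fusing with a neighboring one. I would settle this by verifying it once on $p_{(1)} \ominus p_{(1)}$, tracing the resulting length-$20$ circuit explicitly as in Figure~\ref{figExtensibleLengthCircuitFamily} (right), and then observing that, thanks to the maintained invariant on the first and last rows (and the fact that the long circuit always reaches its bottom corners), the seam neighborhood of $p_{(h-1)} \ominus p_{(1)}$ is isomorphic to that of $p_{(1)} \ominus p_{(1)}$ for every $h$. Hence the same local rerouting applies verbatim, the spliced circuit has length $4+8h$, and every other circuit is one of the length-$4$ rectangles already present in the two factors.
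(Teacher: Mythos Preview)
Your proposal is correct and follows exactly the construction the paper uses: stack a fresh copy of $p_{(1)}$ under $p_{(h-1)}$, swap the four corner labels at the seam so that those four cells close off into a new length-$4$ rectangle, and let the two ``junction'' column matches splice the two long circuits into one of length $4+8h$. Your verification is in fact more explicit than the paper's, which simply says ``it is easy to see''. One small caution: your ``seam-neighborhood isomorphism'' heuristic is slightly loose, because the column-$6$ junction is not local to the seam --- after the swap, the $b$ at $(1,6)$ matches the $d$ at the very bottom row $(4h,6)$, not a cell near the seam. This is harmless, but to make the induction airtight you should strengthen the invariant to include that in $p_{(h)}$ the cells $(1,6)$ and $(4h,6)$ are column-matched (and that the bottom two cells of column~1 are column-matched); both facts are immediate from the construction and suffice to show that removing the two relabeled corners from each long circuit leaves a single path whose endpoints are precisely the junction endpoints, forcing one merged circuit rather than two.
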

\begin{proof}
We prove the statement for $DC_1$, the general case being analogous. The case $h=0$ is obvious. The case $h>0$ is proved by induction on a sequence of pictures $p_{(1)},\ldots p_{(h)}$ using as basis the $DC_1$ picture $p_{(1)}$ in Figure \ref{figExtensibleLengthCircuitFamily} (left), that has size $(m_{(1)}, 6)$, where $m_{(1)}=4$, and contains a circuit of length $12=4+8$, referred to as double-noose. 

\par\noindent 
Induction step. It extends picture $p_{(h-1)}$, $h>1$, by appending a copy of $p_{(1)}$ underneath and making a few changes defined in Figure \ref{figExtensibleLengthCircuitFamily} (right). It is easy to see that the result is a picture $p_{(h)}$ of size $(m_{(h-1)}+4, 6)$ such that: $p_{(h)} \in DC_1$ and $p_{(h)}$ contains a circuit of length $4 + 8h$.
\end{proof}

	Another series of pictures that can be enlarged indefinitely is the one in Figure~\ref{figPictureFarfallaInDC}, where the first two terms of the series are shown. 

\par

\subsection{Quaternate Dyck crosswords}
The next definition forbids any cycle longer than 4 and keeps, e.g., the pictures in Figures~\ref{fig-rectangles} and \ref{figOverlappingRectangles}.
\begin{definition}[Quaternate $DC_k$]\label{defQuaternateDyckLang}
	A Dyck crossword picture such that all its circuits are of length 4 is called \emph{quaternate}; their language, denoted by $DQ_k$, is the \emph{quaternate Dyck language}.
\end{definition}
\begin{figure}[htb]
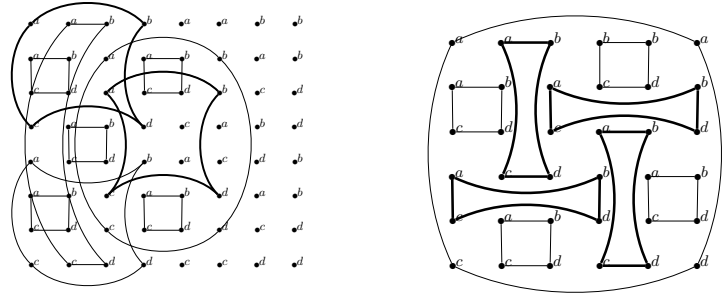

	\begin{center}
		\scalebox{0.50}{%
			\psset{arrows=-,labelsep=3pt,colsep=20pt,rowsep=15pt,nodealign=true}         
			\begin{psmatrix}                                  
				\\                                            
				\dotnode{p11}{$a $}&\dotnode{p12}{$a $}&\dotnode{p13}{$b $}&\dotnode{p14}{$b $} & 
				\dotnode{p15}{$a $}&\dotnode{p16}{$a $}&\dotnode{p17}{$b $}&\dotnode{p18}{$b $}     
				\\   
				\dotnode{p21}{$a$}&\dotnode{p22}{$b$}&\dotnode{p23}{$a $}&\dotnode{p24}{$a$} & 
				\dotnode{p25}{$b$}&\dotnode{p26}{$b $}&\dotnode{p27}{$a$}&\dotnode{p28}{$b$}
				\\
				\dotnode{p31}{$c$}&\dotnode{p32}{$d$}&\dotnode{p33}{$a $}&\dotnode{p34}{$c$} & 
				\dotnode{p35}{$d$}&\dotnode{p36}{$b $}&\dotnode{p37}{$c$}&\dotnode{p38}{$d$}
				\\
				\dotnode{p41}{$c $}&\dotnode{p42}{$a$}&\dotnode{p43}{$b$}&\dotnode{p44}{$d $} & 
				\dotnode{p45}{$c $}&\dotnode{p46}{$a$}&\dotnode{p47}{$b$}&\dotnode{p48}{$d $}
				\\
				\dotnode{p51}{$a $}&\dotnode{p52}{$c$}&\dotnode{p53}{$d$}&\dotnode{p54}{$b $} & 
				\dotnode{p55}{$a $}&\dotnode{p56}{$c$}&\dotnode{p57}{$d$}&\dotnode{p58}{$b $}
				\\
				\dotnode{p61}{$a$}&\dotnode{p62}{$b$}&\dotnode{p63}{$c $}&\dotnode{p64}{$a$} & 
				\dotnode{p65}{$b$}&\dotnode{p66}{$d $}&\dotnode{p67}{$a$}&\dotnode{p68}{$b$}
				\\
				\dotnode{p71}{$c$}&\dotnode{p72}{$d$}&\dotnode{p73}{$c $}&\dotnode{p74}{$c$} & 
				\dotnode{p75}{$d$}&\dotnode{p76}{$d $}&\dotnode{p77}{$c$}&\dotnode{p78}{$d$}
				\\
				\dotnode{p81}{$c $}&\dotnode{p82}{$c $}&\dotnode{p83}{$d $}&\dotnode{p84}{$d $} & 
				\dotnode{p85}{$c $}&\dotnode{p86}{$c $}&\dotnode{p87}{$d $}&\dotnode{p88}{$d $} 
				
				\ncarc[arcangle=45, linestyle=solid,linewidth=1.5pt]{p11}{p14}              
				\ncarc[arcangle=-45, linestyle=solid,linewidth=1.5pt]{p11}{p41}            
				\ncarc[arcangle=0, linestyle=solid,linewidth=0.5pt]{p12}{p13}           
				\ncarc[arcangle=-45, linestyle=solid,linewidth=0.5pt]{p12}{p82}
				\ncarc[arcangle=-45, linestyle=solid,linewidth=1.5pt]{p14}{p44}
				\ncarc[arcangle=-45, linestyle=solid,linewidth=0.5pt]{p13}{p83}                                           
				\ncarc[arcangle=0, linestyle=solid,linewidth=0.5pt]{p21}{p22}             
				\ncarc[arcangle=0, linestyle=solid,linewidth=0.5pt]{p21}{p31}
				\ncarc[arcangle=0, linestyle=solid,linewidth=0.5pt]{p22}{p32}             
				\ncarc[arcangle=45, linestyle=solid,linewidth=0.5pt]{p23}{p26}             
				\ncarc[arcangle=-45, linestyle=solid,linewidth=0.5pt]{p23}{p73}
				\ncarc[arcangle=-0, linestyle=solid,linewidth=0.5pt]{p24}{p25}
				\ncarc[arcangle=-0, linestyle=solid,linewidth=0.5pt]{p24}{p34}
				\ncarc[arcangle=-0, linestyle=solid,linewidth=0.5pt]{p25}{p35}
				\ncarc[arcangle=45, linestyle=solid,linewidth=0.5pt]{p26}{p76}
				
				\ncarc[arcangle=0, linestyle=solid,linewidth=0.5pt]{p31}{p32}
				\ncarc[arcangle=45, linestyle=solid,linewidth=1.5pt]{p33}{p63}
				\ncarc[arcangle=45, linestyle=solid,linewidth=1.5pt]{p33}{p36}
				\ncarc[arcangle=0, linestyle=solid,linewidth=0.5pt]{p34}{p35} 
				\ncarc[arcangle=-45, linestyle=solid,linewidth=1.5pt]{p36}{p66}
				
				\ncarc[arcangle=45, linestyle=solid,linewidth=1.5pt]{p41}{p44}
				\ncarc[arcangle=0, linestyle=solid,linewidth=0.5pt]{p42}{p43}
				\ncarc[arcangle=0, linestyle=solid,linewidth=0.5pt]{p42}{p52}
				\ncarc[arcangle=0, linestyle=solid,linewidth=0.5pt]{p43}{p53}
				
				\ncarc[arcangle=-45, linestyle=solid,linewidth=0.5pt]{p51}{p54}              
				\ncarc[arcangle=-45, linestyle=solid,linewidth=0.5pt]{p51}{p81}            
				\ncarc[arcangle=0, linestyle=solid,linewidth=0.5pt]{p52}{p53}           
				\ncarc[arcangle=-45, linestyle=solid,linewidth=0.5pt]{p54}{p84}
				
				\ncarc[arcangle=0, linestyle=solid,linewidth=0.5pt]{p61}{p62}             
				\ncarc[arcangle=0, linestyle=solid,linewidth=0.5pt]{p61}{p71}
				\ncarc[arcangle=0, linestyle=solid,linewidth=0.5pt]{p62}{p72}
				\ncarc[arcangle=45, linestyle=solid,linewidth=1.5pt]{p63}{p66}
				\ncarc[arcangle=0, linestyle=solid,linewidth=0.5pt]{p64}{p65}
				\ncarc[arcangle=0, linestyle=solid,linewidth=0.5pt]{p64}{p74}
				\ncarc[arcangle=0, linestyle=solid,linewidth=0.5pt]{p65}{p75}
				
				\ncarc[arcangle=0, linestyle=solid,linewidth=0.5pt]{p71}{p72}
				\ncarc[arcangle=-45, linestyle=solid,linewidth=0.5pt]{p73}{p76}
				\ncarc[arcangle=45, linestyle=solid,linewidth=0.5pt]{p33}{p36}
				\ncarc[arcangle=0, linestyle=solid,linewidth=0.5pt]{p74}{p75} 
				
				\ncarc[arcangle=-45, linestyle=solid,linewidth=0.5pt]{p81}{p84} 
				\ncarc[arcangle=0, linestyle=solid,linewidth=0.5pt]{p82}{p83}            
			\end{psmatrix}
			
		}%
		\qquad\qquad\qquad
		\scalebox{0.65}{%
			\psset{arrows=-,labelsep=3pt,colsep=20pt,rowsep=15pt,nodealign=true}         
			\begin{psmatrix}                                  
				\\                                            
				\dotnode{p11}{$a $}&\dotnode{p12}{$a $}&\dotnode{p13}{$b $}&\dotnode{p14}{$b $} & 
				\dotnode{p15}{$b $}&\dotnode{p16}{$a $}     
				\\   
				\dotnode{p21}{$a$}&\dotnode{p22}{$b$}&\dotnode{p23}{$a $}&\dotnode{p24}{$c$} & 
				\dotnode{p25}{$d$}&\dotnode{p26}{$b $}
				\\
				\dotnode{p31}{$c$}&\dotnode{p32}{$d$}&\dotnode{p33}{$c $}&\dotnode{p34}{$a$} & 
				\dotnode{p35}{$b$}&\dotnode{p36}{$d $}
				\\
				\dotnode{p41}{$a $}&\dotnode{p42}{$c$}&\dotnode{p43}{$d$}&\dotnode{p44}{$b $} & 
				\dotnode{p45}{$a $}&\dotnode{p46}{$b$}
				\\
				\dotnode{p51}{$c $}&\dotnode{p52}{$a$}&\dotnode{p53}{$b$}&\dotnode{p54}{$d$} & 
				\dotnode{p55}{$c $}&\dotnode{p56}{$d$}
				\\
				\dotnode{p61}{$c$}&\dotnode{p62}{$c$}&\dotnode{p63}{$d $}&\dotnode{p64}{$c$} & 
				\dotnode{p65}{$d$}&\dotnode{p66}{$d $}
				\ncarc[arcangle=25, linestyle=solid,linewidth=0.5pt]{p11}{p16} 
				\ncarc[arcangle=25, linestyle=solid,linewidth=0.5pt]{p16}{p66}		
				\ncarc[arcangle=25, linestyle=solid,linewidth=0.5pt]{p66}{p61}
				\ncarc[arcangle=25, linestyle=solid,linewidth=0.5pt]{p61}{p11}

				\ncarc[arcangle=0, linestyle=solid,linewidth=0.5pt]{p14}{p15} 
				\ncarc[arcangle=0, linestyle=solid,linewidth=0.5pt]{p15}{p25}				
				\ncarc[arcangle=0, linestyle=solid,linewidth=0.5pt]{p25}{p24}
				\ncarc[arcangle=0, linestyle=solid,linewidth=0.5pt]{p24}{p14} 
				
				\ncarc[arcangle=0, linestyle=solid,linewidth=0.5pt]{p21}{p22} 
				\ncarc[arcangle=0, linestyle=solid,linewidth=0.5pt]{p22}{p32}				
				\ncarc[arcangle=0, linestyle=solid,linewidth=0.5pt]{p32}{p31}
				\ncarc[arcangle=0, linestyle=solid,linewidth=0.5pt]{p31}{p21} 
				
				\ncarc[arcangle=0, linestyle=solid,linewidth=0.5pt]{p45}{p46} 
				\ncarc[arcangle=0, linestyle=solid,linewidth=0.5pt]{p46}{p56}				
				\ncarc[arcangle=0, linestyle=solid,linewidth=0.5pt]{p56}{p55}
				\ncarc[arcangle=0, linestyle=solid,linewidth=0.5pt]{p55}{p45} 
				
				\ncarc[arcangle=0, linestyle=solid,linewidth=0.5pt]{p52}{p53} 
				\ncarc[arcangle=0, linestyle=solid,linewidth=0.5pt]{p53}{p63}				
				\ncarc[arcangle=0, linestyle=solid,linewidth=0.5pt]{p63}{p62}
				\ncarc[arcangle=0, linestyle=solid,linewidth=0.5pt]{p62}{p52} 
				\ncarc[arcangle=0,linestyle=solid,linewidth=1.5pt]{p12}{p13}           \ncarc[arcangle=25,linestyle=solid,linewidth=1.5pt]{p12}{p42}				\ncarc[arcangle=0,linestyle=solid,linewidth=1.5pt]{p42}{p43} 				\ncarc[arcangle=25,linestyle=solid,linewidth=1.5pt]{p43}{p13}               				
				
				\ncarc[arcangle=-25, linestyle=solid,linewidth=1.5pt]{p23}{p26} 
				\ncarc[arcangle=0, linestyle=solid,linewidth=1.5pt]{p26}{p36}
				\ncarc[arcangle=-25, linestyle=solid,linewidth=1.5pt]{p36}{p33}
				\ncarc[arcangle=0, linestyle=solid,linewidth=1.5pt]{p33}{p23} 				
				
				\ncarc[arcangle=-25, linestyle=solid,linewidth=1.5pt]{p41}{p44} 
				\ncarc[arcangle=0, linestyle=solid,linewidth=1.5pt]{p44}{p54}
				\ncarc[arcangle=-25, linestyle=solid,linewidth=1.5pt]{p54}{p51}
				\ncarc[arcangle=0, linestyle=solid,linewidth=1.5pt]{p51}{p41}
				
				\ncarc[arcangle=0,linestyle=solid,linewidth=1.5pt]{p34}{p35}           \ncarc[arcangle=-25,linestyle=solid,linewidth=1.5pt]{p35}{p65}				\ncarc[arcangle=0,linestyle=solid,linewidth=1.5pt]{p65}{p64} 				\ncarc[arcangle=-25,linestyle=solid,linewidth=1.5pt]{p64}{p34}

			\end{psmatrix}
			
		}
	\end{center}
	
	\caption{(Left). A quaternate picture (left) with two overlapping rectangles (thicker lines) that mutually include only one node of the other. To avoid clogging, the rectangles in the specular right half of the picture are not drawn. Such a picture is not neutralizable (Definition \ref{def2DDyckByNeutralization}). The precedence relation (Definition~\ref{defPrecedenceRelation}) is not acyclic since $(1,1)\prec (3,3) \prec (1,1)$, where each rectangle is identified by the coordinate of its north-west node. Another quaternate picture (right) shows a cycle of length 4: $(1,2)\prec (4,1)\prec (3,4)\prec (2,3)\prec (1,2)$.}
	\label{figOverlappingRectangles}
\end{figure}

\section{Language inclusions}~\label{s-inclusions}
In this section we show the strict language inclusions existing between the alternative definitions of 2D Dyck languages.
\par
Since $DC_k$ pictures may contain circuits of length $>4$, (e.g., in Figure~\ref{figPictureFarfallaInDC}) quaternate Dyck languages are strictly included in Dyck crosswords.
\par It is obvious that $DN_k\subseteq DQ_k$; a natural question is then whether the inclusion is strict. 
To answer, we define a precedence relation between two rectangles of a $DQ_k$ picture such that the first must be neutralized before the second.
\begin{definition}[precedence in neutralization]\label{defPrecedenceRelation}
Let $p\in DQ_k$ and let $\alpha$ and $\beta$ two rectangles (i.e. length 4 circuits) occurring in $p$. Rectangle $\alpha$ has {\em priority} over $\beta$ 
if, and only if, one, two or four nodes of $\alpha$ fall inside rectangle $\beta$ or on its sides. (For three nodes it is impossible.). Let $\prec$, the {\em precedence relation}, be the transitive closure of the priority relation.
\end{definition}
\begin{example}[precedence relation]\label{exPrecedenceRela}
The precedence relation for the picture in Figure~\ref{figOverlappingRectangles}, left, has the length-2 cycle 
$(1,1) \prec (3,3) \prec (1,1)$, blocking the neutralization process of the two rectangles evidenced by thicker lines.
The picture in Figure~\ref{figOverlappingRectangles}, right, has a cycle of length 4.
\end{example}

\begin{theorem}[neutralizable vs quaternate]\label{theorNeutralizableVSquaternate}
A picture in $DQ_k$ is neutralizable if and only if its precedence relation is acyclic. 
\end{theorem}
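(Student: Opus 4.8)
The plan is to translate the geometric \emph{priority} relation into a single combinatorial invariant of the neutralization process. In a quaternate picture each cell is a node of \emph{exactly one} rectangle: by Theorem~\ref{theorGraphCircuits}(1) the matching graph is partitioned into disjoint circuits, and being quaternate means every circuit has length $4$. Moreover, an application of the rule of Definition~\ref{def2DDyckByNeutralization} to a rectangle $\beta$ rewrites \emph{only} the four corners of $\beta$ into $N$, leaving every other cell of its bounding box untouched (those cells are required to be $N$ already). Hence a cell turns neutral if and only if the unique rectangle owning it is neutralized. First I would record the key reformulation: for distinct rectangles $\alpha\neq\beta$, $\alpha$ has priority over $\beta$ precisely when some node of $\alpha$ lies among the non-corner cells (interior or side) of $\beta$; since that cell must carry $N$ before $\beta$ can be neutralized, and it becomes $N$ only when $\alpha$ is neutralized, priority of $\alpha$ over $\beta$ is equivalent to the temporal constraint ``$\alpha$ must be neutralized strictly before $\beta$''. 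Note also that a node of $\alpha$ can never coincide with a corner of $\beta$ (that cell would then belong to two rectangles), so the side-versus-corner distinction causes no trouble. Finally, as $\prec$ is the transitive closure of priority, $\prec$ is acyclic if and only if the priority relation itself is.

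For the direction ``acyclic $\Rightarrow$ neutralizable'' I would fix a linear extension of the partial order $\prec$ and neutralize the rectangles in that order. When $\beta$ is reached, every non-corner cell $c$ of its bounding box belongs to some rectangle $\alpha_c$ having the node $c$ inside or on a side of $\beta$; hence $\alpha_c$ has priority over $\beta$, so $\alpha_c\prec\beta$ and $\alpha_c$ was processed earlier, making $c$ already $N$. The four corners of $\beta$ are still corner symbols, since a corner of $\beta$ is owned by $\beta$ alone and is rewritten only at this very step; by Theorem~\ref{theorGraphCircuits}(2) they form a genuine matching quadruple $a_j,b_j,d_j,c_j$ spanning a box with $m,n\ge 2$, so the rule applies. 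Thus each step succeeds, consistency of the ``must-be-$N$-first'' constraints is guaranteed by the topological order, and after all rectangles are processed the whole picture lies in $N^{++}$.

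For the converse I would argue by contradiction: assume the priority relation contains a directed cycle $\alpha_1,\dots,\alpha_r$ with each $\alpha_i$ having priority over $\alpha_{i+1}$ (indices mod $r$), and suppose some neutralization sequence fully neutralizes $p$. Every rectangle must be neutralized at some step, since a cell becomes $N$ only through its owner. Let $\alpha_j$ be the cycle member neutralized \emph{earliest} in this sequence. Its cyclic predecessor $\alpha_{j-1}$ has a node among the non-corner cells of $\alpha_j$; at the moment $\alpha_j$ is neutralized this cell is required to be $N$, yet it becomes $N$ only when $\alpha_{j-1}$ is neutralized, which happens strictly later by minimality of $\alpha_j$ — a contradiction. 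Hence no cyclic configuration can be neutralized, which by the contrapositive yields ``neutralizable $\Rightarrow$ acyclic''.

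I expect the main obstacle to be the bookkeeping of the third paragraph rather than any deep idea: one must justify carefully that the \emph{only} mechanism by which an interior or side cell of a rectangle turns neutral is the neutralization of the single rectangle that owns it, so that the local geometric priority faithfully encodes the global temporal ordering of neutralization steps. Once this invariant is pinned down, both implications reduce to the elementary fact that a finite relation admits a compatible linear order exactly when it is acyclic.
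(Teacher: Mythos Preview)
Your proposal is correct and follows the same approach as the paper: topological sort of the rectangles for the ``acyclic $\Rightarrow$ neutralizable'' direction, and the observation that no rectangle on a precedence cycle can ever be neutralized for the converse. Your write-up is considerably more detailed than the paper's terse two-sentence argument --- in particular you make explicit the key invariant that each cell is owned by a unique rectangle and becomes $N$ only through that rectangle's neutralization, and you verify that the four corners of a quaternate circuit sit in the $a_j,b_j,c_j,d_j$ pattern required by the rule --- but the underlying idea is identical.
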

\begin{proof}
Let relation $\prec$ be acyclic. Then sort the rectangles in topological order and apply neutralization starting from a rectangle without predecessors. When a rectangle is checked, all of its predecessors have already been neutralized, and neutralization can proceed until all rectangles are neutralized.
The converse is obvious: if relation $\prec$ has a cycle no rectangle in the cycle can be neutralized.
\end{proof}

\par
From previous properties of the various 2D Dyck languages introduced in this paper, we obtain a strict linear  hierarchy with respect to language inclusion.

\begin{corollary}[hierarchy]\label{CoroHierarchy}
	$DW_k \subsetneq DN_k \subsetneq DQ_k\subsetneq DC_k.$
\end{corollary}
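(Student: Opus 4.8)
The plan is to prove the three strict inclusions separately, in each case factoring the claim into the (already available) non-strict containment plus an explicit picture that separates the two languages. I would stress that no new machinery is required: every ingredient has been prepared earlier, so the proof amounts to assembling the pieces in the right order.

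For the leftmost inclusion $DW_k \subsetneq DN_k$, nothing new is needed. This is exactly the content of the earlier theorem stating that $DW_k$ is strictly included in $DN_k$, whose proof already exhibits the neutralizable picture $p_N$ that cannot be obtained by nesting accretion. I would simply invoke that result. For the rightmost inclusion $DQ_k \subsetneq DC_k$, the containment $DQ_k \subseteq DC_k$ is immediate from Definition~\ref{defQuaternateDyckLang}, since a quaternate picture is by definition a Dyck crossword all of whose circuits have length $4$. Strictness follows from Theorem~\ref{theorDCwithUnboundedCircuits}: for every $h \ge 1$ there is a $DC_k$ picture carrying a circuit of length $4 + 8h > 4$ (already for $h = 1$ one obtains the length-$12$ circuit of Figure~\ref{figPictureFarfallaInDC}), so such a picture lies in $DC_k \setminus DQ_k$.

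The middle inclusion $DN_k \subsetneq DQ_k$ carries the only genuine content. The containment $DN_k \subseteq DQ_k$ holds because each neutralization step uses a corner quadruple $a_i,b_i,c_i,d_i$ that matches as a single length-$4$ rectangle in the matching graph; hence the matching graph of a neutralizable picture decomposes into rectangles and the picture is quaternate. For strictness I would appeal to Theorem~\ref{theorNeutralizableVSquaternate}, which certifies that a quaternate picture is neutralizable if and only if its precedence relation $\prec$ is acyclic. It therefore suffices to display a quaternate picture whose precedence relation contains a cycle: the picture of Figure~\ref{figOverlappingRectangles} (left), with its two overlapping rectangles yielding $(1,1) \prec (3,3) \prec (1,1)$, is such a witness and consequently belongs to $DQ_k \setminus DN_k$. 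Chaining the three strict inclusions gives the asserted hierarchy.

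The hard part is precisely this middle separation, and the delicacy lies not in the inclusion but in \emph{certifying non-neutralizability}. Directly arguing that the overlapping-rectangle picture admits no reduction to $N^{++}$ would require reasoning over all possible neutralization sequences; this is exactly why Theorem~\ref{theorNeutralizableVSquaternate} is indispensable, as it replaces that semantic question by the finite combinatorial check that the precedence relation of the witness picture is cyclic. Everything else in the proof is bookkeeping over results and figures already in hand.
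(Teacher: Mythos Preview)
Your proposal is correct and follows essentially the same approach as the paper, which simply states that the corollary follows from the previously established properties; you have made explicit which prior results are being invoked for each of the three strict inclusions. The only place where you add a sentence of justification beyond what the paper writes is the containment $DN_k \subseteq DQ_k$, which the paper declares ``obvious''; your explanation that each neutralization step contributes a length-$4$ rectangle to the matching graph is the right reason, and is consistent with the paper's treatment.
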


\section{Conclusion}\label{s-conclusion}
By introducing some definitions of 2D Dyck languages  we have made the first step towards a new characterization of 2D context-free languages by means of the Chomsky-Sch\"{u}tzenberger theorem suitably reformulated for picture languages. But, in our opinion, the mathematical study of the properties of 2D Dyck languages has independent interest, and much remains to be understood, especially for the richer case of Dyck crosswords.
Very diverse patterns may occur in $DC_k$ pictures, that currently we are unable to classify. The variety of patterns is related to the length of the circuits in the matching graph and to the number of intersection points in a circuit or between different circuits.
\par
We mention two specific open problems. 
(i) The picture ${\genfrac{}{}{0pt}{2}{ab}{cd}}$ has just one circuit, which is therefore Hamiltonian; it is not known whether there exist any other Hamiltonian pictures in $DC_1$. 
(ii) By Theorem~\ref{theorDCwithUnboundedCircuits} the length of circuits in $DC_1$ pictures is unbounded. The question is whether, for all values $n>1$, there is a $DC_1$ picture containing a circuit of length $4n$.
\par
{A related range of questions concerns the "productivity" of a circuit, meaning the existence of $DC_k$ pictures incorporating the circuit. A simple formulation is: given a circuit situated in its bounding box, does a $DC_k$ picture exist of a size equal or larger than the bounding box, such that the same circuit occurs within the picture? 
}

\textbf{Acknowledgment}: We thank Matteo Pradella for helpful discussions.
\bibliographystyle{abbrv}
\bibliography{automatabib}

\end{document}